\documentclass[12pt]{article}

\usepackage{amsmath,amssymb,amsfonts}
\usepackage{amsthm}
\usepackage{booktabs}
\usepackage{multirow}
\usepackage{array}
\usepackage{url}
\usepackage{graphicx}
\usepackage{xcolor}
\usepackage{float}
\usepackage{placeins}
\usepackage{geometry}
\usepackage{hyperref}
\usepackage{adjustbox}

\newtheorem{theorem}{Theorem}
\newtheorem{lemma}{Lemma}
\newtheorem{proposition}{Proposition}
\newtheorem{definition}{Definition}
\newtheorem{remark}{Remark}
\newtheorem{example}{Example}

\newtheorem{corollary}{Corollary}

\newcommand{\Z}{\mathbb{Z}}

\newcommand{\safeincludegraphics}[2][]{%
  \IfFileExists{#2}{\includegraphics[#1]{#2}}{%
    \fbox{\parbox{0.88\linewidth}{\centering Missing figure file: \texttt{\detokenize{#2}}}}%
  }%
}

\title{Local Fault Repair of Perfect Resource Placements in Dense Gaussian Networks}

\author{Bader A. Albader\\
\small Department of Computer Science, Faculty of Science, Kuwait University, Kuwait\\
\small \texttt{albader@cs.ku.edu.kw}}

\date{}

\begin{document}

\maketitle

\begin{abstract}
Perfect resource placement in dense Gaussian networks partitions the network into Lee balls centered at resource nodes. The fault-free placement problem is already classified; this paper studies the complementary post-deployment problem of repairing such placements after resource faults. The paper gives exact local repair theorems for the dense Gaussian placement generated by $t+(t+1)i$; by conjugation and rotation symmetry, the same results hold for the companion generator $(t+1)+ti$. For one failed resource, we prove failure-cell locality, derive the exact replacement number $\rho_G(1)=3$ and $\rho_G(t)=2$ for all $t\ge2$, and prove the sharp minimum-overlap formula $\Omega_G(t)=t+1$ among minimum-size repairs. The overlap lower bound is proved from the corner structure of equal-size Lee balls in the rotated coordinates $u=x+y$ and $v=x-y$, where Gaussian Lee balls become parity-constrained squares. For two failed resources, we prove exact additivity: every pair of failed resource cells requires exactly four local replacements for $t\ge2$, and four always suffice. The two-fault lower bound reduces all relevant resource displacements to two canonical neighboring cases and exhibits four mutually incompatible failed-cell corners in each case. For multi-failure repairs, we prove a general inclusion--exclusion identity for overlap inside the failed region; hence the formula remains exact for arbitrary higher-order dense cores. When a canonical repair instance is certified to have maximum multiplicity three, the identity reduces to the compact correction $\Omega_{\rm extra}=P_2-A-C_3$. A ground-truth audit over 7,494 Gaussian cases recomputes coverage from lattice geometry, verifies all exact formulas, and records reproducible multiplicity witnesses.
\end{abstract}

\noindent\textbf{Keywords:} Gaussian networks, resource placement, perfect dominating sets, local repair, fault tolerance, interconnection networks, Lee metric.

\section{Introduction}

Resource placement in interconnection networks asks how special service nodes, such as storage modules, I/O units, controllers, or replicated software services, should be distributed so that all processors have bounded access cost. A particularly strong placement is a \emph{perfect $t$-placement}: every vertex is within distance $t$ of exactly one resource. Equivalently, the radius-$t$ balls centered at the resource nodes partition the network.

For Gaussian interconnection networks, the fault-free perfect-placement problem is already essentially complete. Gaussian networks were modeled using the Gaussian integers in \cite{MartinezGaussian2008}, and Mary and Bose later classified the canonical perfect placements in Gaussian and Eisenstein--Jacobi networks \cite{FlahiveBose2013}. This paper therefore does not attempt to rediscover the perfect placement. Instead, it starts after deployment: assume a classified perfect Gaussian placement is already active, and ask what happens after one or more resource nodes fail.

Fault tolerance in graph-based networks has also been studied through complementary robustness parameters, including fault-tolerant metric dimension in multistage interconnection networks \cite{PrabhuKlavzar2022}, conditional and extra diagnosability of interconnection networks \cite{ChengMaoQiuShen2022}, and fault-tolerant variants of domination such as power domination \cite{GirishSomasundaram2024}. The broader domination literature also contains surveys and taxonomies of perfect, efficient, and independent domination terminology \cite{Klostermeyer2015,GoddardHenning2013}. In parallel, Gaussian-network research has studied structural and communication aspects of dense Gaussian topologies, including routing and broadcasting in on-chip multiprocessor networks \cite{MartinezDenseGaussian2006}. The present paper is different in focus: the fault-free resource set is already a perfect dominating set, and the objective is not to design a globally fault-tolerant set in advance but to determine the exact local replacement cost after resource failures.

The operational goal is local repair. When a resource fails, the vertices that were served by that resource become uncovered. Rather than globally recomputing a new perfect placement, the network may activate nearby replacement resources. This creates a new optimization problem: restore $t$-domination locally, minimize the number of replacement resources, and quantify the unavoidable overlap inside the failed cells.

The contributions of this paper are as follows.
\begin{itemize}
    \item We formulate local fault repair for perfect resource placements in dense Gaussian networks as a post-deployment domination-recovery problem.
    \item We show that the two Mary--Bose dense Gaussian generator families are isometric for local repair, so it suffices to prove the results for $t+(t+1)i$.
    \item We prove locality: after one resource fails, the only uncovered vertices are the vertices of the failed Lee ball, and useful replacement candidates lie within distance $2t$.
    \item We prove the exact one-failure replacement numbers $\rho_G(1)=3$ and $\rho_G(t)=2$ for all $t\ge2$.
    \item We prove the exact one-failure minimum-overlap formula $\Omega_G(t)=t+1$ among minimum-size repairs.
    \item We prove exact two-fault additivity for $t\ge2$: every pair of failed Gaussian resource cells requires exactly four local replacements, and four always suffice.
    \item We prove a general multi-failure overlap identity using exact inclusion--exclusion over replacement multiplicities in the failed region.
    \item We give a certificate-based triple-core specialization for canonical local repairs and include an independent ground-truth audit of 7,494 Gaussian cases, recomputing coverage and overlap directly from lattice geometry.
\end{itemize}

Table~\ref{tab:main-results} summarizes the main formal results.

\begin{table}[H]
\centering
\caption{Main results and proof locations.}
\label{tab:main-results}
\begin{adjustbox}{max width=\textwidth}
\begin{tabular}{p{0.22\textwidth}p{0.33\textwidth}p{0.35\textwidth}}
\toprule
Setting & Result & Proof method\\
\midrule
One failed resource & $\rho_G(1)=3$ and $\rho_G(t)=2$ for $t\ge2$ & Lee-ball extremal points and explicit diagonal/anti-diagonal repairs\\
One failed resource, minimum overlap & $\Omega_G(t)=t+1$ for $t\ge2$ & Rotated-coordinate slice lower bound and tight interface construction\\
Two failed resources & $\rho_G^{(2)}(t,\Delta)=4$ for every pair and every $t\ge2$ & Rotated-square displacement reduction and four incompatible corners\\
General multi-failure overlap & $O(R)=\sum_{j\ge2}(-1)^jP_j(R)$ & Exact vertexwise inclusion--exclusion over replacement multiplicities\\
Certified triple-core instances & $\Omega_{\rm extra}=P_2-A-C_3$ when $\max\mu_R\le3$ & Conditional corollary plus multiplicity certificate\\
\bottomrule
\end{tabular}
\end{adjustbox}
\end{table}

\section{Gaussian Preliminaries}

The Gaussian integer grid is $\Z[i]=\{x+yi:x,y\in\Z\}$ with adjacency defined by differences $\pm1$ and $\pm i$. In coordinate form we write vertices as integer pairs $(x,y)$, and the graph distance from the origin is the Lee distance
\begin{equation}
    d_G((0,0),(x,y))=|x|+|y|.
\end{equation}
The radius-$t$ ball centered at $c$ is
\begin{equation}
    B_t(c)=\{u:d_G(u,c)\le t\},
\end{equation}
and
\begin{equation}
    |B_t(0)|=2t^2+2t+1.
\end{equation}

\begin{definition}[Perfect $t$-placement]
A resource set $S$ in a finite Gaussian network is a perfect $t$-placement if every vertex $u$ is covered by exactly one resource ball:
\begin{equation}
    |\{s\in S:d_G(u,s)\le t\}|=1.
\end{equation}
\end{definition}

\begin{theorem}[Gaussian case of Mary--Bose classification \cite{FlahiveBose2013}]
Let $G_\alpha$ be a Gaussian network generated by $\alpha$. Then $G_\alpha$ has a perfect $t$-dominating set exactly in the canonical divisibility cases, generated by associates of
\begin{equation}
    t+(t+1)i.
\end{equation}
In these cases the resource set is a translate of the corresponding ideal placement.
\end{theorem}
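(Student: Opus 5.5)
The natural route is the classical tiling argument: identify a perfect $t$-placement in $G_\alpha=\Z[i]/\langle\alpha\rangle$ with a perfect tiling of $\Z[i]$ by translates of the Lee ball $B_t(0)$ that is periodic modulo the ideal $\langle\alpha\rangle$. First I lift the resource set $S\subseteq G_\alpha$ to its preimage $\widetilde S\subseteq\Z[i]$. For $t$ small relative to the network, balls of radius $t$ do not wrap around, so the local graph structure agrees with $\Z[i]$. The perfect condition $|\{s\in S:d_G(u,s)\le t\}|=1$ then says exactly that the balls $B_t(s)$, $s\in\widetilde S$, partition $\Z[i]$. Counting vertices also gives the necessary condition $(2t^2+2t+1)\mid N(\alpha)$, since $|S|\cdot|B_t(0)|=N(\alpha)$.

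The key geometric step is to classify the perfect Lee-sphere tilings of $\Z^2$ with radius $t$. I would show that any such tiling is a lattice tiling, namely a translate of $\langle t+(t+1)i\rangle$ or of its conjugate $\langle (t+1)+ti\rangle$. The argument follows the standard Golomb--Welch style reasoning. Consider the tile $B_t(c)$ and the vertex $c+(t+1)$ just beyond its rightmost corner. That vertex must be covered by some neighboring ball, and the two edge-adjacent boundary diagonals force that ball's center to be $c+\delta$ with $\delta\in\{t+(t+1)i,\ (t+1)-ti\}$ or a rotation of these. Any other displacement either overlaps $B_t(c)$ or leaves an uncovered cell in the staircase boundary. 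A propagation argument then shows the choice is consistent around each tile. The translation set is therefore closed under $\delta$ and $i\delta$, hence equals $c+\langle\delta\rangle$. I expect this rigidity step, which rules out ``sheared'' or mixed tilings, to be the main obstacle. I would attack it by casework on which neighbor covers each of the four corner-adjacent cells, using the Lee-ball corner structure in the rotated coordinates $u=x+y$, $v=x-y$. In those coordinates the boundary cells are aligned and the case analysis becomes finite.

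Finally I pass back to $G_\alpha$. The tiling $c+\langle\delta\rangle$ descends to $G_\alpha$ exactly when it is invariant under translation by $\alpha$, i.e.\ $\alpha\in\langle\delta\rangle$, which means $\delta\mid\alpha$ in $\Z[i]$. This yields the canonical divisibility cases, generated by associates of $t+(t+1)i$; the conjugate family is absorbed by conjugation symmetry, as the paper notes. It also shows that $S$ is a translate of the ideal placement $\langle\delta\rangle/\langle\alpha\rangle$. For the converse, I check directly that $N(\delta)=2t^2+2t+1=|B_t(0)|$ and that the balls around $\langle\delta\rangle$ are pairwise disjoint. Disjointness holds because every nonzero element of $\langle\delta\rangle$ has Lee norm at least $2t+1$, which follows from the minimal vectors $\delta, i\delta, \delta\pm i\delta$. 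The count then forces these disjoint balls to cover everything, so the placement is perfect.
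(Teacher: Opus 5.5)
The paper does not prove this statement; it quotes it from Flahive--Bose, so your argument has to stand on its own. Your architecture is the right one: lift to a Lee tiling of $\Z[i]$, classify such tilings, descend via $\alpha\in\langle\delta\rangle$, and prove the converse by disjointness plus counting. The descent and converse steps are fine. For the Lee-norm bound, pass to $u=x+y$, $v=x-y$, where $\langle\delta\rangle$ is spanned by $(2t+1,-1)$ and $(-1,-(2t+1))$; solving for the coefficients shows no nonzero lattice vector has $\max(|u|,|v|)\le 2t$.

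\textbf{The rigidity step.} The genuine gap is the rigidity step, which is the whole content of the classification, and the local claim you build it on is wrong. The balls that cover $c+(t+1)$ without meeting $B_t(c)$ are exactly those centred at $c+(2t+1-k)\pm ki$ with $0\le k\le t$. The centre $c+t+(t+1)i$ is not among them: it lies at Lee distance $t+2$ from $c+(t+1)$. The correct forced answer is $k=t$, i.e.\ $c+(t+1)\pm ti$. The two signs are exactly the two families, $(t+1)-ti=-i\,(t+(t+1)i)$ versus $(t+1)+ti$. Your list $\{t+(t+1)i,(t+1)-ti\}$ up to rotation is a single associate class, so as written it would rule out the conjugate tiling that you later list as the second case. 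Excluding $0\le k<t$ is also not a one-step consequence of the staircase boundaries. The real crux is showing that the same sign is chosen at every tile, so that no mixed tiling exists, and ``a propagation argument'' leaves exactly this unproved.

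\textbf{How to close it.} Work in the rotated picture you already mention.
\begin{enumerate}
\item Each tile is a $(2t+1)\times(2t+1)$ block of integer points centred at a point with $u\equiv v \pmod 2$, and its corners also satisfy $u\equiv v$. The intersection of two blocks is a rectangle, which is either a common corner or contains two adjacent points. Either way two blocks that meet share a vertex. Hence in a perfect tiling the blocks are pairwise disjoint, and the uncovered integer points (holes) have $u\not\equiv v$, so no two holes are edge-adjacent.
\item At each corner of each block, at least one of the two outside edge-neighbours of the corner is a hole. Otherwise the two blocks covering them would both contain the diagonal point.
\item The four blocks around any hole form a pinwheel: each has one end flush with the hole, all turning the same way. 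This follows by checking the four diagonal pairs of blocks for overlap. The pinwheel also shows that at most one of the two corner-neighbours in step 2 is a hole.
\item The orientation seen at one corner of a block forces the same orientation at the next corner, and neighbouring blocks share holes.
\end{enumerate}
Consequently either every block has neighbours at $\pm(2t+1,-1)$ and $\pm(1,2t+1)$, or every block has the mirror image. These two cases are $c+\langle t+(t+1)i\rangle$ and $c+\langle (t+1)+ti\rangle$.

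\textbf{Hypotheses to state explicitly.} The lifted set is a tiling only if no nonzero element of $\langle\alpha\rangle$ has Lee norm at most $2t$. You must state this as a hypothesis; it is not automatic. Without it the ``only if'' direction fails as literally stated: for $\alpha=3$, $t=2$ the $3\times3$ torus has diameter $2$, so $\{0\}$ is a perfect $2$-dominating set, although $N(2+3i)=13\nmid 9$.

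Likewise, what you actually prove is that $t+(t+1)i\mid\alpha$ or $(t+1)+ti\mid\alpha$. Conjugation identifies $G_\alpha$ with $G_{\bar\alpha}$ only up to isomorphism. It does not turn the second case into divisibility by an associate of $t+(t+1)i$; for example, $\alpha=2+i$ with $t=1$ is not such a case. Say this explicitly rather than absorbing it.
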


\begin{proposition}[Companion-generator symmetry]\label{prop:companion-symmetry}
The two dense Gaussian generator families
\[
    \alpha_1=t+(t+1)i
    \qquad\text{and}\qquad
    \alpha_2=(t+1)+ti
\]
are equivalent for the local repair problem. The map \(\Phi(z)=i\overline{z}\) is an isometry of the Gaussian grid that sends \(\alpha_1\) to \(\alpha_2\). Consequently, all local replacement numbers and all overlap identities proved for \(t+(t+1)i\) hold unchanged for the companion generator \((t+1)+ti\).
\end{proposition}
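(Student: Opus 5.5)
The plan is to verify the claim directly and then transport every structure used in the local repair problem along $\Phi$. First, $\Phi$ is an isometry. Writing $z=x+yi$, we have $\overline{z}=x-yi$ and $i\overline{z}=y+xi$, so $\Phi(x,y)=(y,x)$ is the coordinate swap. It is a bijection of $\Z[i]$ and an involution. It is also additive, $\Phi(z+w)=\Phi(z)+\Phi(w)$, since conjugation and multiplication by $i$ are both additive. It permutes the generating steps $\{\pm1,\pm i\}$: $1\mapsto i$, $i\mapsto 1$, $-1\mapsto -i$, $-i\mapsto -1$. Hence it preserves adjacency, and $d_G(\Phi u,\Phi v)=|y_u-y_v|+|x_u-x_v|=d_G(u,v)$. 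In particular $\Phi(B_t(c))=B_t(\Phi(c))$. Finally, $\Phi(\alpha_1)=\Phi(t+(t+1)i)=(t+1)+ti=\alpha_2$ by the coordinate swap.

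Second, I will show that $\Phi$ carries the whole network $G_{\alpha_1}$ onto $G_{\alpha_2}$, and not merely the generator. The point to handle is that $\Phi$ is only $\mathbb{R}$-linear, not $\Z[i]$-linear. Indeed $\Phi(iz)=i\,\overline{iz}=i(-i)\overline{z}=-i\cdot i\overline{z}\cdot(-1)\cdot(-1)$; more cleanly, $\Phi(iz)=\overline{i}\,\Phi(z)=-i\,\Phi(z)$. Therefore
\[
\Phi(\beta\alpha_1)=\overline{\beta}\,\Phi(\alpha_1)=\overline{\beta}\,\alpha_2
\qquad\text{for every }\beta\in\Z[i].
\]
Since $\beta\mapsto\overline{\beta}$ is a bijection of $\Z[i]$, we get $\Phi(\alpha_1\Z[i])=\alpha_2\Z[i]$. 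So $\Phi$ maps the ideal lattice onto the ideal lattice. Because $\Phi$ is additive, it respects congruence modulo these ideals and descends to a well-defined graph isomorphism $G_{\alpha_1}\to G_{\alpha_2}$. That quotient is again an isometry, because both quotient distances are minima of $d_G$ over cosets and $\Phi$ matches cosets to cosets.

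Third, I will transport the repair problem. By the Mary--Bose classification theorem, the perfect placement for $\alpha_1$ is a translate $S_1=s+\alpha_1\Z[i]$, and $\Phi(S_1)=\Phi(s)+\alpha_2\Z[i]$ is a perfect placement for $\alpha_2$. For any failure set $F\subseteq S_1$ and candidate replacement set $R$, the following all correspond bijectively under $\Phi$:
\begin{itemize}
\item the uncovered region $\bigcup_{f\in F}B_t(f)$;
\item coverage of that region by $\bigcup_{r\in R}B_t(r)$;
\item the multiplicities $\mu_R(u)=|\{r\in R:d_G(u,r)\le t\}|$, for which $\mu_{\Phi R}(\Phi u)=\mu_R(u)$;
\item the locality constraints, such as distance $\le 2t$ from failed resources.
\end{itemize}
Consequently every minimum repair, every overlap count, and every quantity $P_j$, $A$, $C_3$ is preserved. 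The same holds for the two-fault displacement classes, since displacements transform by $\Phi$ as well. This gives equality of $\rho_G$, $\Omega_G$, $\rho_G^{(2)}$, and the inclusion--exclusion identities for the two families.

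The only delicate step is the second one: the antilinearity of $\Phi$ means the ideal is mapped using $\overline{\beta}$ rather than $\beta$. Once $\Phi(\alpha_1\Z[i])=\alpha_2\Z[i]$ is established, everything else is formal transport of structure by an isometry.
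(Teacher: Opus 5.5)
Your proof is correct and takes essentially the same route as the paper. Like the paper, you identify \(\Phi\) as the coordinate swap, deduce that it preserves Lee distance and sends \(\alpha_1\) to \(\alpha_2\), and transport balls, cells, and coverage multiplicities; you additionally check that \(\Phi(\beta z)=\overline{\beta}\,\Phi(z)\) gives \(\Phi(\alpha_1\Z[i])=\alpha_2\Z[i]\), which justifies the paper's unproved assertion that \(\Phi\) maps the placement cells of the first family onto those of the companion family.
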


\begin{proof}
For \(z=x+yi\), \(\Phi(z)=i(x-yi)=y+xi\), so \(\Phi\) swaps the two integer coordinates. Therefore
\[
    d_G(\Phi(z),\Phi(w))=|y_z-y_w|+|x_z-x_w|=d_G(z,w).
\]
Also
\[
    \Phi(t+(t+1)i)=i(t-(t+1)i)=(t+1)+ti.
\]
Thus \(\Phi\) maps radius-\(t\) Lee balls to radius-\(t\) Lee balls, maps perfect-placement cells in the first dense family to the corresponding cells in the companion family, and preserves coverage multiplicities. Since the repair model is defined only through these metric and incidence relations, the two families have identical local repair behavior.
\end{proof}

By Proposition~\ref{prop:companion-symmetry}, it suffices to present the proofs for one representative of the two dense Gaussian generator families. Throughout the paper we take the quotient induced by
\begin{equation}
    \alpha=t+(t+1)i,
\end{equation}
so that the network size is
\begin{equation}
    N=t^2+(t+1)^2=2t^2+2t+1.
\end{equation}

\section{Fault-Recovery Model}

Let $S$ be a perfect $t$-placement and let $F\subseteq S$ be the set of failed resource nodes. The active resources are $S\setminus F$. A local repair activates a replacement set $R$ of non-failed nodes. The repaired resource set is
\begin{equation}
    S'=(S\setminus F)\cup R.
\end{equation}
For one failed resource $r$, a repair is valid if
\begin{equation}
    B_t(r)\subseteq \bigcup_{x\in R}B_t(x).
\end{equation}
For several failed resources $F=\{r_1,\ldots,r_q\}$, the failed region is
\begin{equation}
    U(F)=\bigcup_{j=1}^q B_t(r_j),
\end{equation}
and a repair is valid if $U(F)$ is covered by the replacement balls.

For a one-failure repair, define $\rho_G(t)$ to be the minimum number of replacements needed to restore domination. Among minimum-size repairs, define $\Omega_G(t)$ to be the minimum number of vertices in the failed cell covered by at least two replacement balls.

\section{Locality Theorems}

\begin{theorem}[Failure-cell locality]\label{thm:failure-cell-locality}
Let $S$ be a perfect $t$-placement and let $r\in S$ fail. Then the only vertices that can become uncovered are the vertices in $B_t(r)$.
\end{theorem}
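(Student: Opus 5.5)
The argument is a direct consequence of the perfect-placement definition, so I will argue by uniqueness of the covering resource. Fix a vertex $u$ of the network and suppose $u\notin B_t(r)$, i.e.\ $d_G(u,r)>t$. I will show that $u$ is still covered after $r$ fails; the contrapositive is exactly the statement of Theorem~\ref{thm:failure-cell-locality}.

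First, since $S$ is a perfect $t$-placement, the definition gives exactly one $s\in S$ with $d_G(u,s)\le t$. Because $d_G(u,r)>t$, this $s$ cannot equal $r$, so $s\in S\setminus\{r\}$. The active resource set after the failure is $S\setminus\{r\}$ (before any replacements are activated), so $s$ remains active and $u\in B_t(s)$ is still covered. Adding a replacement set $R$ only enlarges the active set, so coverage of $u$ persists in $S'=(S\setminus\{r\})\cup R$ as well.

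For completeness I will also note the converse direction, which makes the statement sharp. Every $u\in B_t(r)$ has $r$ as its unique covering resource, so every such $u$ is uncovered once $r$ fails. Hence the set of uncovered vertices is exactly $B_t(r)$. Distances are taken in the quotient network $G_\alpha$, where $B_t(r)$ has $2t^2+2t+1$ distinct vertices when $N=2t^2+2t+1$. In fact $B_t(r)$ is then the whole network, which causes no difficulty since the argument never uses geometry.

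There is no real obstacle here; the only point requiring care is to be explicit that distances and balls are interpreted in the finite quotient graph, so that the perfect-placement definition applies verbatim to every vertex. With that convention the proof is a two-line uniqueness argument.
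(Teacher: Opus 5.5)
Your argument is correct and is essentially the paper's own proof. A vertex $u\notin B_t(r)$ has a unique covering resource, that resource cannot be $r$, and so it stays active after the failure. Your extra remarks are accurate and do not affect this purely combinatorial argument: every vertex of $B_t(r)$ does become uncovered, and in the literal quotient of size $N=2t^2+2t+1$ the ball $B_t(r)$ is the whole network.
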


\begin{proof}
Since $S$ is a perfect $t$-placement, every vertex $u$ is covered by exactly one resource in $S$. If $u\notin B_t(r)$, then $r$ was not the resource covering $u$. Thus the unique resource that covered $u$ remains active after $r$ is removed. Only vertices originally covered by $r$, namely the vertices in $B_t(r)$, can become uncovered.
\end{proof}

\begin{theorem}[Candidate locality]\label{thm:candidate-locality}
If a replacement node $x$ satisfies $d_G(x,r)>2t$, then $B_t(x)\cap B_t(r)=\emptyset$. Hence $x$ cannot contribute to repairing the failed cell $B_t(r)$.
\end{theorem}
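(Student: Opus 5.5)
The plan is a contrapositive argument via the triangle inequality for the Lee distance $d_G$. Suppose, to the contrary, that some vertex $u$ lies in $B_t(x)\cap B_t(r)$. Then $d_G(u,x)\le t$ and $d_G(u,r)\le t$. Since $d_G$ is the graph distance of the Gaussian grid, it is a metric and in particular symmetric, so
\[
d_G(x,r)\le d_G(x,u)+d_G(u,r)\le 2t .
\]
This contradicts the hypothesis $d_G(x,r)>2t$, so the intersection $B_t(x)\cap B_t(r)$ is empty.

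For the second sentence of the statement, I would argue as follows. By Theorem~\ref{thm:failure-cell-locality}, the only vertices that need re-coverage after the failure of $r$ are those in $B_t(r)$. A replacement node $x$ contributes to the repair only through the set $B_t(x)\cap B_t(r)$. When that set is empty, removing $x$ from a valid repair set $R$ leaves the condition $B_t(r)\subseteq\bigcup_{y\in R}B_t(y)$ unchanged. Hence such an $x$ is useless, and every candidate worth considering lies within distance $2t$ of $r$.

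The one point that needs care is the finite quotient network. There, $d_G$ should be read as the graph distance in the quotient $\Z[i]/(\alpha)$, namely the minimum Lee norm over coset representatives. This is still the shortest-path metric of a connected graph, so the triangle inequality holds and the argument goes through verbatim. I would state this explicitly rather than rely only on the formula $|x|+|y|$ for $\Z[i]$. I expect no real obstacle beyond this remark.
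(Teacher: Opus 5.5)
Your proof is correct and is the paper's argument: assuming a common vertex $u\in B_t(x)\cap B_t(r)$, the triangle inequality gives $d_G(x,r)\le 2t$, a contradiction. Your extra remarks are accurate elaborations the paper leaves implicit: that such an $x$ can be removed from any repair set, and that in the quotient network $d_G$ is the shortest-path metric, so the triangle inequality still holds.
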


\begin{proof}
If $u\in B_t(x)\cap B_t(r)$, then by the triangle inequality,
\begin{equation}
    d_G(x,r)\le d_G(x,u)+d_G(u,r)\le2t,
\end{equation}
contradicting $d_G(x,r)>2t$.
\end{proof}

\section{One-Fault Exact Local Repair}\label{sec:one-fault}

By translation, assume that the failed resource is the origin. The failed cell is $B_t(0)$.

\begin{lemma}[One replacement is impossible]\label{lem:one-replacement-impossible-g}
No single replacement node $x\ne0$ can cover the entire failed cell $B_t(0)$.
\end{lemma}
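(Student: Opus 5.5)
We prove the lemma with an extremal-point argument that uses only the four ``corner'' vertices of the failed cell. These are
\[
    e_1=(t,0),\qquad e_2=(0,t),\qquad e_3=(-t,0),\qquad e_4=(0,-t),
\]
all of which lie in $B_t(0)$.

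Suppose a single replacement node $x=(a,b)$ has $B_t(0)\subseteq B_t(x)$. Then $d_G(x,e_k)\le t$ for each $k$. The proof pairs opposite corners and applies the triangle inequality.

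\begin{itemize}
\item Since $d_G(e_1,e_3)=2t$, we get
\[
    2t=d_G(e_1,e_3)\le d_G(e_1,x)+d_G(x,e_3)\le 2t.
\]
Hence equality holds throughout, so $x$ lies on a geodesic between $e_1$ and $e_3$.
\item In the Lee metric, such geodesic points are exactly the points with $b=0$ and $-t\le a\le t$. Directly: $|a-t|+|b|+|a+t|+|b|\ge 2t+2|b|$, so $b=0$.
\item The same argument with $e_2$ and $e_4$ forces $a=0$.
\end{itemize}

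Thus $x=0$, which contradicts $x\neq0$. This uses only the Lee distance formula, so it holds verbatim in the quotient network, provided distances there agree with planar Lee distances on the relevant small configuration.

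The only real obstacle is this quotient issue. In $G_\alpha$ with $N=2t^2+2t+1$ nodes, distance is the minimum of $|x|+|y|$ over representatives modulo $\alpha$. A node $x$ could then reach some corner ``the other way around.''

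To handle this, we use that $B_t(0)$ is itself a full set of residues. Since $|B_t(0)|=N$ and the placement is perfect, the ball has exactly $N$ distinct elements. We then note that the condition $B_t(0)\subseteq B_t(x)$ with $|B_t(x)|=|B_t(0)|=N$ forces $B_t(x)=B_t(0)$, i.e.\ the whole network. Taken alone, this does not immediately give a contradiction when there is a single resource.

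So we interpret the lemma as the paper does. The repair is local in the plane: candidates are lifted to $\Z^2$ via Theorem~\ref{thm:candidate-locality}, with $d_G(x,0)\le 2t$. Coverage of the failed cell is measured by planar Lee balls around the lifted failed resource, and in that setting the corner argument applies.

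If needed, one could also treat small $t$ separately. The argument is stated in the infinite lattice model for the translate-invariant placement, which is the model the rest of Section~\ref{sec:one-fault} uses.
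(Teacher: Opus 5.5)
Your proof is correct and follows the paper's argument: the paper gets $B_t(x)=B_t(0)$ from containment plus equal cardinality and then simply asserts that the extreme points $(\pm t,0),(0,\pm t)$ force the center to be the origin, whereas your triangle-inequality bound $|a-t|+|a+t|+2|b|\ge 2t+2|b|$ actually proves this and makes the cardinality step unnecessary. Your quotient caveat is justified and in fact understated: in the literal dense network with $N=2t^2+2t+1$ every radius-$t$ ball is the whole network, so any $x\ne0$ is a one-node repair and the lemma is false there. The paper's proof, like yours, is therefore valid only in the planar lattice model, or in a quotient large enough that the neighborhood of the failed cell embeds isometrically.
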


\begin{proof}
If $B_t(0)\subseteq B_t(x)$, then the two finite balls have the same size, hence $B_t(0)=B_t(x)$. The four extreme points $(t,0)$, $(-t,0)$, $(0,t)$, and $(0,-t)$ determine the center of a Lee ball uniquely, forcing $x=0$. This contradicts that the failed resource itself is unavailable as a replacement.
\end{proof}

The constructive proof uses rotated coordinates
\begin{equation}\label{eq:uv-transform}
    u=x+y,\qquad v=x-y.
\end{equation}
The Lee ball becomes the parity sublattice of an axis-aligned square:
\begin{equation}\label{eq:lee-ball-square}
    B_t(0)=\{(x,y): |u|\le t, |v|\le t, u\equiv v\pmod2\}.
\end{equation}

\begin{figure}[H]
\centering
\safeincludegraphics[width=0.65\linewidth]{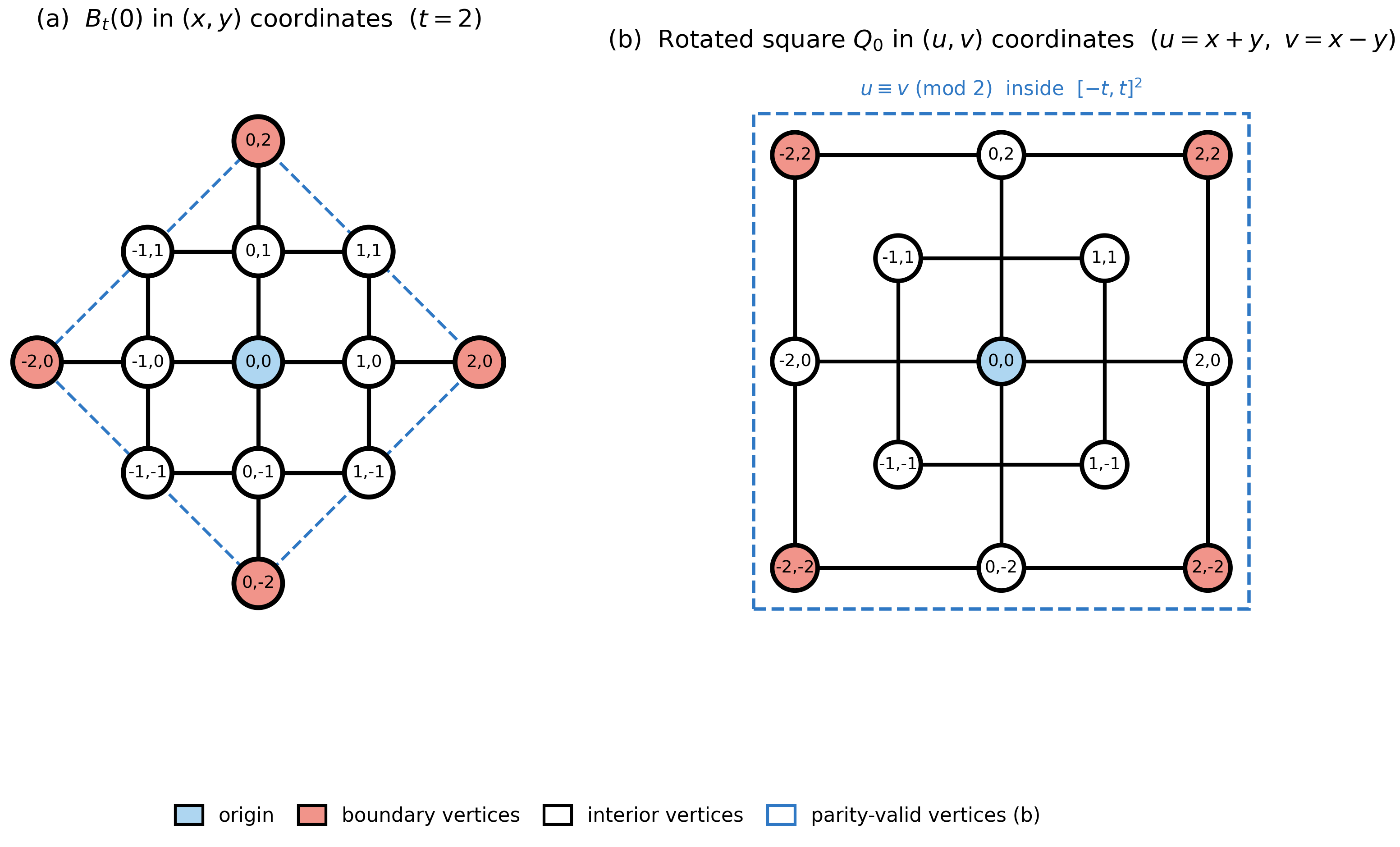}
\caption{Rotated-coordinate representation. The Lee ball $B_t(0)$ in $(x,y)$ coordinates becomes a parity-constrained square in $(u,v)$ coordinates. This geometric picture is the basis for all one-fault and two-fault lower bounds.}
\label{fig:uv-transform}
\end{figure}

\begin{lemma}[$t=1$ exception]\label{lem:t1-exception}
For $t=1$, the failed cell $B_1(0)$ has five vertices, two replacement balls cannot cover it, and three can. Hence $\rho_G(1)=3$.
\end{lemma}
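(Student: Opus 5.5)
The plan is to prove the two halves separately: a counting-plus-extreme-point argument for the lower bound, and an explicit three-node construction for the upper bound. Throughout, $B_1(0)=\{(0,0),(\pm1,0),(0,\pm1)\}$ has five vertices.

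\emph{Lower bound.} Suppose two replacement nodes $x_1,x_2\neq 0$ cover $B_1(0)$. By Theorem~\ref{thm:candidate-locality}, each $x_j$ lies within distance $2$ of the origin, so $d_G(x_j,0)\in\{1,2\}$. A ball $B_1(x)$ meets $B_1(0)$ in the following sets.
\begin{itemize}
\item If $d_G(x,0)=1$, say $x=(1,0)$, it contains exactly $\{(0,0),(1,0)\}$, two vertices.
\item If $x=(1,1)$, it contains exactly $\{(1,0),(0,1)\}$, two vertices.
\item If $x=(2,0)$, it contains only $\{(1,0)\}$, one vertex.
\end{itemize}
By rotation symmetry these cases are exhaustive. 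So each replacement ball covers at most two of the five vertices, and two balls cover at most four. This contradicts coverage of $B_1(0)$, so two replacements are impossible. Together with Lemma~\ref{lem:one-replacement-impossible-g}, this gives $\rho_G(1)\ge 3$.

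\emph{Upper bound.} Take $R=\{(1,0),(-1,0),(0,1)\}$. Then:
\begin{itemize}
\item $B_1(1,0)$ covers $(0,0)$ and $(1,0)$;
\item $B_1(-1,0)$ covers $(-1,0)$;
\item $B_1(0,1)$ covers $(0,1)$.
\end{itemize}
This leaves $(0,-1)$ uncovered, so the choice must be adjusted. Use $R=\{(1,0),(-1,-1)\ldots\}$ style choices instead: pick $(1,1)$ to cover $(1,0),(0,1)$, pick $(-1,-1)$ to cover $(-1,0),(0,-1)$, and pick $(1,0)$ (or any neighbor of the origin) to cover $(0,0)$. All three nodes are non-failed, since none is the origin. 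If the model also requires replacements to avoid the other resources of $S$, note that $S$ has the single coset representative in the quotient of size $5$. In that case one checks that these nodes are ordinary vertices in the infinite-lattice local picture, which is the setting the paper uses.

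The only subtle point is the intersection case check in the lower bound, which is routine. With both bounds in place, $\rho_G(1)=3$.
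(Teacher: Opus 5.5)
Your proof is correct and follows the same route as the paper. Both enumerate how $B_1(x)$ meets $B_1(0)$ for centers at distance $1$ or $2$ and then give an explicit three-node witness. Your closing count (each ball meets at most two of the five vertices, so two balls reach at most four) is slightly cleaner than the paper's origin-based case split, and your witness $\{(1,1),(-1,-1),(1,0)\}$ works as well as the paper's $\{(-1,-1),(0,1),(2,0)\}$. In a write-up, delete the abandoned first attempt and the unnecessary aside about the size-$5$ quotient. Also say that a center at distance $>2$ meets $B_1(0)$ in no vertex, rather than claiming every replacement must lie within distance $2$.
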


\begin{proof}
The failed cell is
\begin{equation}
B_1(0)=\{(0,0),(1,0),(-1,0),(0,1),(0,-1)\}.
\end{equation}
A radius-one ball centered at a nonzero node covers the origin only when the center is one of the four neighbors of the origin, and then it covers exactly the origin and its own center inside $B_1(0)$. A radius-one ball centered at a diagonal node covers two boundary vertices but not the origin; a distance-two axial center covers one boundary vertex. Therefore two candidates cannot cover all five vertices: either the origin is uncovered, or at most three of the four boundary vertices are covered. Three replacements suffice, for example
\begin{equation}
R=\{(-1,-1),(0,1),(2,0)\}.
\end{equation}
These cover the two negative boundary vertices, the origin and north boundary vertex, and the east boundary vertex, respectively.
\end{proof}

\begin{figure}[H]
\centering
\safeincludegraphics[width=0.65\linewidth]{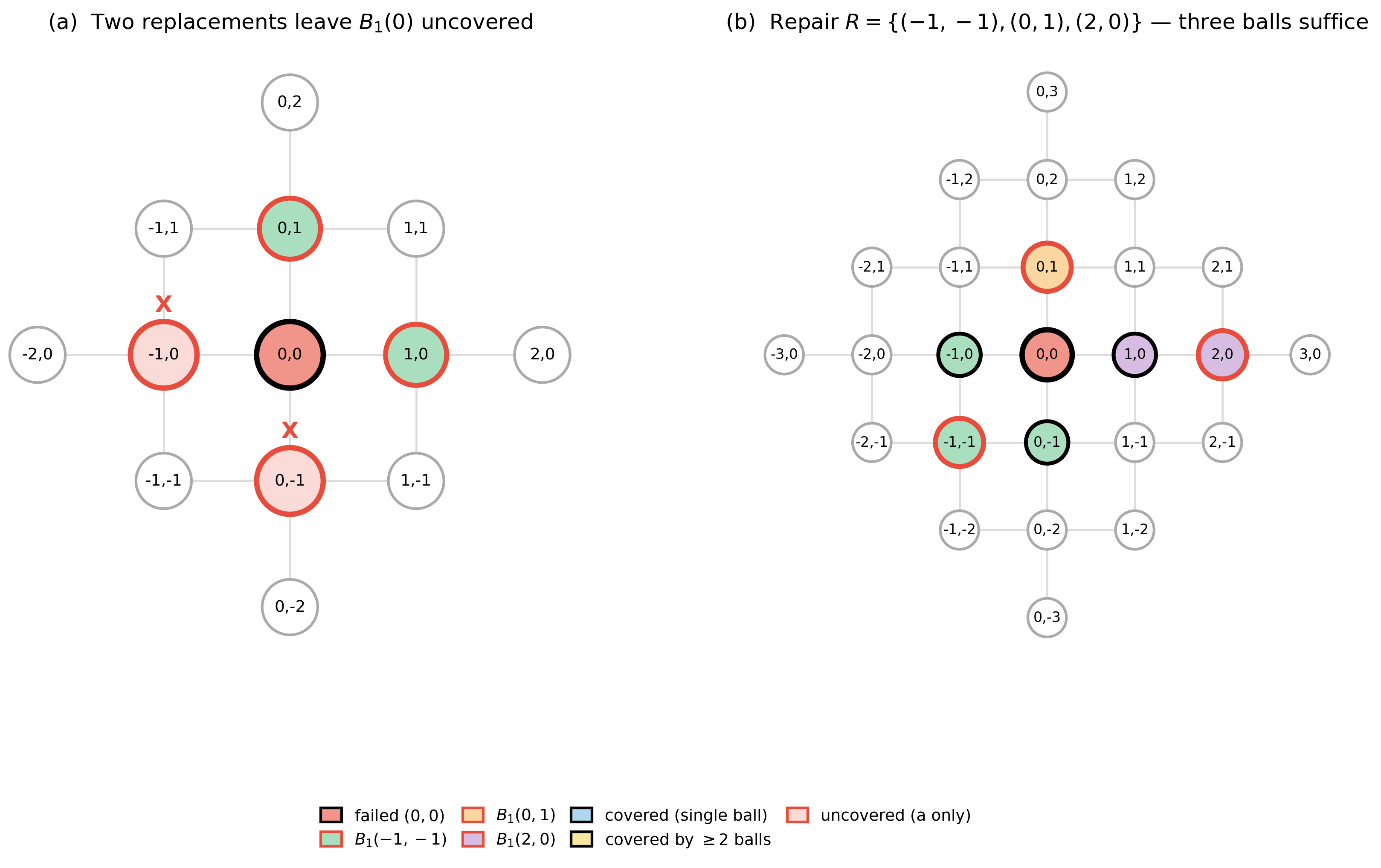}
\caption{Special case $t=1$. Two replacement balls cannot cover all five vertices of the failed cell, while the repair set $\{(-1,-1),(0,1),(2,0)\}$ covers the cell.}
\label{fig:t1-special}
\end{figure}

\begin{theorem}[Two-replacement construction]\label{thm:g-two-replacement}
For $t\ge2$ and every $1\le a\le t$, the diagonal pair
\begin{equation}
    R_a^+=\{(-a,-a),(t-a,t-a)\}
\end{equation}
covers $B_t(0)$. The anti-diagonal pair
\begin{equation}
    R_a^-=\{(-a,a),(t-a,-t+a)\}
\end{equation}
also covers $B_t(0)$.
\end{theorem}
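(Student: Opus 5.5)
The plan is to work entirely in the rotated coordinates \eqref{eq:uv-transform}, where the Lee metric becomes a Chebyshev (max) metric. The key identity is that for lattice points $w=(x,y)$ and $c=(x_c,y_c)$,
\[
d_G(w,c)=|x-x_c|+|y-y_c|=\max\bigl(|u_w-u_c|,\,|v_w-v_c|\bigr),
\]
which follows from $|p|+|q|=\max(|p+q|,|p-q|)$. Consequently a radius-$t$ Lee ball around any lattice point $c$ is exactly the set of lattice points whose $(u,v)$-image lies in the axis-aligned square $[u_c-t,u_c+t]\times[v_c-t,v_c+t]$. This generalizes \eqref{eq:lee-ball-square} to arbitrary centers. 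The parity constraint $u\equiv v\pmod 2$ only says which $(u,v)$ pairs are lattice points. Since every vertex of $B_t(0)$ is already a lattice point, parity plays no further role, and covering reduces to containment of real squares.

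\textbf{Diagonal pair.} First I compute the images of the centers in $R_a^+$. The point $(-a,-a)$ maps to $(u,v)=(-2a,0)$, and $(t-a,t-a)$ maps to $(2t-2a,0)$. Both squares have $v$-range $[-t,t]$, which matches the $v$-range of the failed cell. It therefore suffices to show that the $u$-intervals $[-2a-t,-2a+t]$ and $[t-2a,3t-2a]$ together cover $[-t,t]$.

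These two intervals overlap: the second starts at $t-2a$, which is at most $t-2a+... $ more precisely $t-2a\le -2a+t$, so the right end of the first interval is at least the left end of the second. Their union is therefore the single interval $[-2a-t,\,3t-2a]$. This union contains $[-t,t]$ exactly when $-2a-t\le -t$ and $3t-2a\ge t$, that is, when $a\ge 0$ and $a\le t$. Both inequalities hold for $1\le a\le t$. Hence every vertex of $B_t(0)$ lies in one of the two replacement balls. Neither center is the origin, because $a\ge1$ and $t-a\ne0$ unless $a=t$; in the case $a=t$ the second center is the origin, so the range $a=t$ needs a remark.

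I expect this to be the only delicate point. Whether it causes trouble depends on whether the model forbids the failed node itself as a replacement, and Lemma~\ref{lem:one-replacement-impossible-g} indicates that it does. I would first check whether the statement intends $1\le a\le t-1$. If the statement must stand as written for $a=t$, I would note that the second center $(0,0)$ is the failed resource itself, so that case is degenerate under the model, and flag the range accordingly.

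\textbf{Anti-diagonal pair.} I would deduce this case from the diagonal one by symmetry rather than recomputing. The reflection $(x,y)\mapsto(x,-y)$ is a Lee isometry that fixes the origin, so it maps $B_t(0)$ to itself. It also maps $R_a^+$ onto $R_a^-$ and maps balls to balls. Coverage of $B_t(0)$ by $R_a^+$ therefore transfers directly to coverage by $R_a^-$. Equivalently, in $(u,v)$ coordinates the reflection swaps $u$ and $v$, so the same interval computation applies with the roles of the axes exchanged.
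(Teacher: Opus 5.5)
Your proof is correct and follows essentially the same route as the paper: in rotated coordinates both replacement squares span the full $v$-range $[-t,t]$, so coverage reduces to the $u$-intervals $[-2a-t,\,t-2a]$ and $[t-2a,\,3t-2a]$ meeting at $u=t-2a$ (the paper phrases this as every failed-cell point satisfying $u\le t-2a$ or $u\ge t-2a$), and the anti-diagonal case follows from the reflection $(x,y)\mapsto(x,-y)$, which swaps $u$ and $v$. Your flag on $a=t$ is also well taken: the covering statement still holds there, but the second center is then the failed origin itself, so only $1\le a\le t-1$ gives a legitimate repair, a point the paper's proof does not address.
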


\begin{proof}
It is enough to prove the diagonal case. In the coordinates \eqref{eq:uv-transform}, the ball centered at $(-a,-a)$ contains a failed-cell point exactly when
\begin{equation}
    \max\{|u+2a|,|v|\}\le t.
\end{equation}
Inside $B_t(0)$, the inequality $|v|\le t$ already holds, so this reduces to $u\le t-2a$. Similarly, the ball centered at $(t-a,t-a)$ contains a failed-cell point exactly when $u\ge t-2a$. Every failed-cell point satisfies one of these two inequalities, so the two balls cover $B_t(0)$. Reflection gives the anti-diagonal construction.
\end{proof}

\begin{example}[Worked $t=2$ repair]\label{ex:t2-repair}
For $t=2$, choose $a=1$ in Theorem~\ref{thm:g-two-replacement}. The repair pair is
\[
    R=\{(-1,-1),(1,1)\}.
\]
In rotated coordinates, the two centers are $(-2,0)$ and $(2,0)$. For any point of $B_2(0)$, the first replacement ball covers exactly the side with $u\le0$, while the second covers exactly the side with $u\ge0$. Their common interface is the parity-valid slice $u=0$, containing the three vertices
\[
    (-1,1),\quad (0,0),\quad (1,-1).
\]
Thus the concrete $t=2$ instance already shows both the two-replacement coverage mechanism and the sharp overlap value $t+1=3$.
\end{example}

\begin{figure}[H]
\centering
\safeincludegraphics[width=0.65\linewidth]{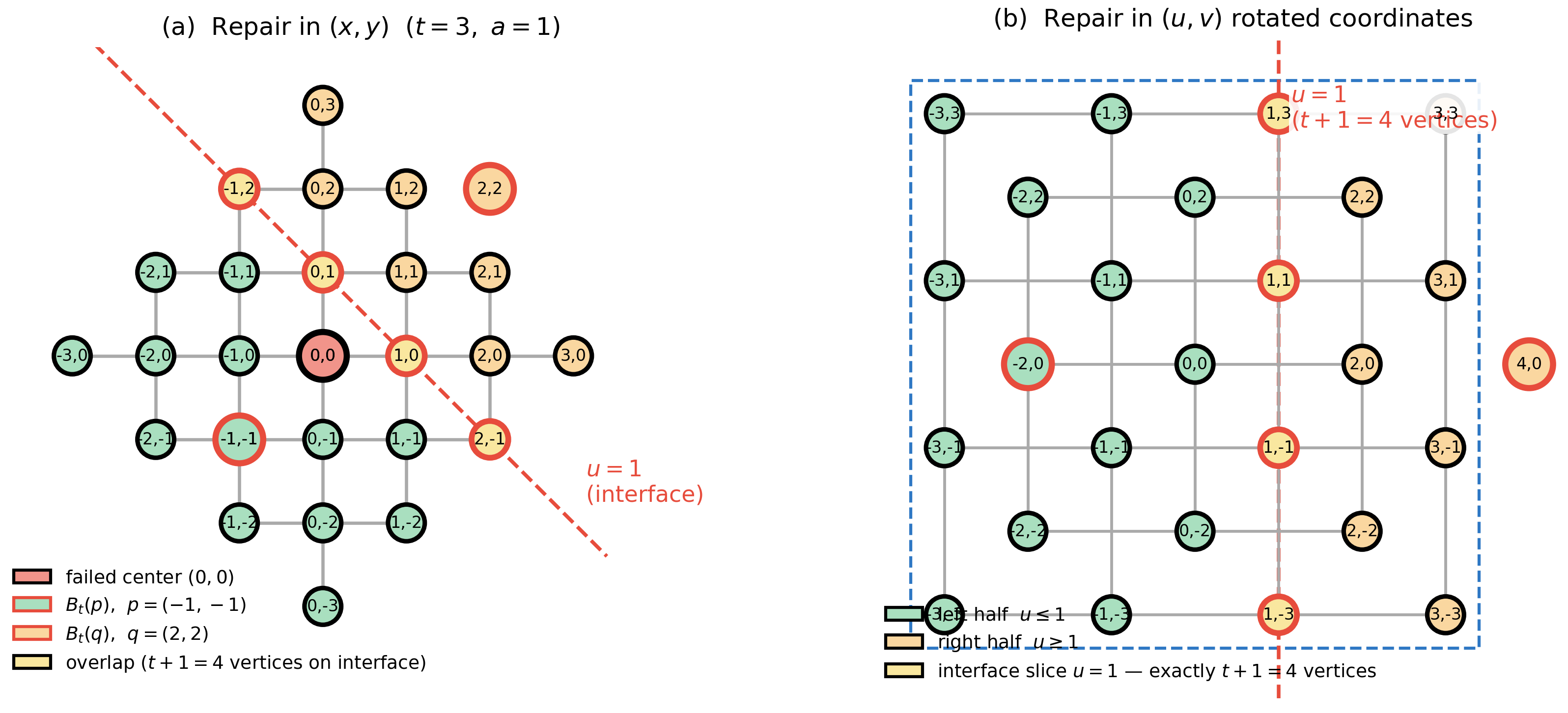}
\caption{Diagonal one-fault repair. The two replacement balls split the failed square at an interface slice; the two balls overlap on exactly $t+1$ parity-valid vertices.}
\label{fig:diagonal-repair}
\end{figure}

\begin{theorem}[Exact one-fault Gaussian repair number]\label{thm:rho-g}
For dense Gaussian networks with canonical perfect $t$-placements,
\begin{equation}
\rho_G(t)=
\begin{cases}
3, & t=1,\\
2, & t\ge2.
\end{cases}
\end{equation}
\end{theorem}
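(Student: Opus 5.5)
The theorem follows by assembling the results already established in Section~\ref{sec:one-fault}, so the plan is to split on $t$ and combine a lower bound with a matching construction in each case. The preliminary step is to justify working with the translated failed cell $B_t(0)$ in the infinite grid $\Z[i]$. By Theorem~\ref{thm:failure-cell-locality}, the vertices that must be re-covered are exactly those of $B_t(0)$. By Theorem~\ref{thm:candidate-locality}, only candidates within distance $2t$ of the origin matter. That window has diameter $4t$, but the quotient has $N=2t^2+2t+1$ vertices, so for small $t$ the local picture is not obviously faithful. I will therefore state the result for the local repair model of Section~3: coverage of $B_t(0)$ by Lee balls of radius $t$ centered at nonzero nodes. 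I will remark that this is the sense in which $\rho_G$ is defined.

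For $t=1$, the claim $\rho_G(1)=3$ is exactly Lemma~\ref{lem:t1-exception}, which gives both the impossibility of two replacements and the explicit three-element repair. For $t\ge2$, the lower bound $\rho_G(t)\ge2$ is Lemma~\ref{lem:one-replacement-impossible-g}: a single replacement ball has the same cardinality $2t^2+2t+1$ as the failed cell, so containment would force equality. The extreme points $(\pm t,0),(0,\pm t)$ then pin the center to $0$, and the origin is the failed node and cannot be a replacement. The upper bound $\rho_G(t)\le2$ comes from Theorem~\ref{thm:g-two-replacement} with, say, $a=1$, giving the pair $\{(-1,-1),(t-1,t-1)\}$. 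Both centers are nonzero because $t\ge2$, and neither coincides with another failed resource since only one resource has failed.

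The main point to check is that the replacements are legitimate nodes, that is, not the failed node and distinct from each other. For $t\ge2$ and $a=1$ we have $(t-1,t-1)\ne(-1,-1)$ and both differ from $0$, so this holds. For $t=1$ it fails: $a=1$ would give $(0,0)$, which is exactly why the construction is restricted to $t\ge2$ and $t=1$ is an exception. I will also note that coverage in the quotient follows from coverage in $\Z[i]$, because the quotient map is distance non-increasing. Combining the two cases gives the stated piecewise formula.
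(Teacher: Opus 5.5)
Your proposal is correct and matches the paper's own proof: Lemma~\ref{lem:t1-exception} settles $t=1$, and for $t\ge2$ Lemma~\ref{lem:one-replacement-impossible-g} gives the lower bound while Theorem~\ref{thm:g-two-replacement} supplies the matching two-element repair. Two of your additions are worth keeping: the explicit choice $a=1$ with the check $(t-1,t-1)\neq 0$ matters, because the paper's range $1\le a\le t$ admits $a=t$, which puts a replacement at the failed origin; and stating the result in the local $\Z[i]$ model is the right call, since the literal quotient by $t+(t+1)i$ has diameter $t$, so there a single replacement would suffice for every $t$, not only for small $t$.
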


\begin{proof}
For $t\ge2$, Theorem~\ref{thm:g-two-replacement} gives a two-replacement repair, while Lemma~\ref{lem:one-replacement-impossible-g} rules out one replacement. For $t=1$, Lemma~\ref{lem:t1-exception} gives necessity and sufficiency of three replacements.
\end{proof}

\section{Minimum Overlap for One Fault}\label{sec:minimum-overlap}

Let a two-replacement repair be $R=\{p,q\}$. The overlap inside the failed cell is
\begin{equation}
    |B_t(p)\cap B_t(q)\cap B_t(0)|.
\end{equation}

\begin{lemma}[Corner pairing in a two-square cover]\label{lem:corner-pairing}
Let two radius-$t$ Gaussian balls centered away from the origin cover $B_t(0)$. In rotated coordinates, their two parity-constrained squares cover the four corners of $Q_0$. Then, up to interchanging $u$ and $v$ and reflecting coordinates, one square covers the two left corners of $Q_0$ and the other covers the two right corners of $Q_0$.
\end{lemma}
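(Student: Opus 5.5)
In rotated coordinates \eqref{eq:uv-transform}, $Q_0=\{|u|\le t,\ |v|\le t,\ u\equiv v \pmod 2\}$. Its four corners are $(\pm t,\pm t)$. These are parity-valid, and they correspond to the Lee-ball extreme points $(t,0),(0,t),(-t,0),(0,-t)$. A ball centered at $c$ becomes the parity-constrained square $Q_c=\{\max(|u-u_c|,|v-v_c|)\le t\}$, intersected with the same parity class, where $(u_c,v_c)$ is the rotated center with $u_c\equiv v_c$. Each corner must lie in at least one of the two squares $Q_p,Q_q$.

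The plan has three steps.

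\textbf{Step 1: no square contains two opposite corners.} The corners $(t,t)$ and $(-t,-t)$ have sup-distance $2t$. A square of half-width $t$ contains both exactly when its center is $(0,0)$, that is, when the center is the origin. The anti-diagonal pair behaves the same way. Since both centers are away from the origin, each square contains only corners lying on one common side of $Q_0$: a set of two adjacent corners, one corner, or none. In particular no square contains three or four corners, because any three corners include an opposite pair. This step is routine.

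\textbf{Step 2: the cover forces an opposite-side split.} Four corners must be covered by two squares, each covering at most two corners, and those two must be adjacent. So each square covers exactly two adjacent corners, and the two pairs are disjoint. The disjointness follows from a count: if the sets overlapped, only three corners would be covered. Two disjoint adjacent pairs among the four corners of a square must form opposite sides, $\{$left, right$\}$ or $\{$top, bottom$\}$. If they formed two sides sharing a corner, they would overlap. Interchanging $u\leftrightarrow v$ turns top/bottom into left/right. This swap is the reflection $y\mapsto -y$ in the original coordinates, which is an isometry preserving $B_t(0)$. Relabeling $p,q$, or reflecting $u\mapsto -u$, then gives the stated normal form.

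\textbf{Main obstacle: justifying that both squares really must meet $Q_0$ at corners.} This is mostly immediate, since the corners belong to $Q_0$ and hence to the union. The real delicacy lies in Step 1. There I need the fact that a sup-norm square of half-width $t$ containing two points at sup-distance exactly $2t$ in both coordinates has a unique center. This holds because the conditions $|u_c-t|\le t$ and $|u_c+t|\le t$ force $u_c=0$, and similarly $v_c=0$. Parity plays no role here, since the corners are parity-valid.

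I also need to check that ``centered away from the origin'' excludes exactly $(u_c,v_c)=(0,0)$. It does, because $(x,y)\mapsto(u,v)$ is injective. I will also record that the conclusion is consistent with Theorem~\ref{thm:g-two-replacement}. There, after rotation, $(-a,-a)\mapsto(-2a,0)$ and $(t-a,t-a)\mapsto(2t-2a,0)$. These give exactly a left/right split of $Q_0$.
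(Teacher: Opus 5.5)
Your proof is correct and follows the same route as the paper. The half-width-$t$ square through an opposite pair of corners is forced to be $Q_0$, so each replacement square covers at most one side's worth of corners, counting then forces complementary opposite sides, and the top/bottom case reduces to left/right via $u\leftrightarrow v$. The details you add that the paper leaves implicit all check out: the explicit derivation $u_c=v_c=0$, disjointness of the two covered pairs by counting, and the identification of $u\leftrightarrow v$ with the reflection $y\mapsto -y$.
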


\begin{proof}
A radius-$t$ square distinct from $Q_0$ cannot contain two opposite corners of $Q_0$. Indeed, the only axis-aligned square of side length $2t$ containing both $(-t,-t)$ and $(t,t)$ is $Q_0$ itself, and the same holds for the other opposite pair $(-t,t)$ and $(t,-t)$. Since the failed resource at the origin is unavailable, neither replacement square can cover an opposite-corner pair.

Thus each replacement square can cover at most two adjacent corners of $Q_0$. Because the two squares together must cover all four corners, their covered corner pairs must be complementary adjacent pairs. Up to symmetry, these are the left pair and the right pair. The other possibility, bottom and top, is obtained by interchanging $u$ and $v$.
\end{proof}

\begin{lemma}[Forced interface slice]\label{lem:forced-interface-slice}
Let two replacement balls cover $B_t(0)$ for $t\ge2$. Then their overlap inside $B_t(0)$ contains a complete parity-valid slice of $B_t(0)$ containing exactly $t+1$ vertices.
\end{lemma}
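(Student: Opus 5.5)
We work entirely in the rotated coordinates \eqref{eq:uv-transform}, where $B_t(0)$ is the parity-constrained square $Q_0=[-t,t]^2$ with $u\equiv v\pmod 2$. A replacement centered at $p$ has rotated center $(u_p,v_p)$ with $u_p\equiv v_p\pmod 2$. Its ball is the parity-constrained square $[u_p-t,u_p+t]\times[v_p-t,v_p+t]$.

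\textbf{Step 1: fix the centers.} Let the two replacements be $p,q$. By Lemma~\ref{lem:corner-pairing}, after possibly interchanging $u,v$ and reflecting, the square of $p$ contains the two left corners $(-t,-t)$ and $(-t,t)$, and the square of $q$ contains the two right corners. These symmetries preserve $Q_0$, the parity condition, and overlap counts, so no generality is lost. Containing both $v=-t$ and $v=t$ in an interval of length $2t$ forces $v_p=0$, and containing $u=-t$ forces $u_p\le 0$. Likewise $v_q=0$ and $u_q\ge0$. Parity then makes $u_p,u_q$ even. Neither center is the failed origin, so $u_p\le-2$ and $u_q\ge2$.

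\textbf{Step 2: the two squares must meet.} Because $v_p=v_q=0$, both squares contain the full $v$-range $[-t,t]$. Inside $Q_0$ the $p$-square is therefore exactly the slab $u\le u_p+t$, and the $q$-square is exactly the slab $u\ge u_q-t$. The two endpoints $u_p+t$ and $u_q-t$ are both congruent to $t$ modulo $2$.

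Suppose $u_q-t>u_p+t$. Then $u_q-t\ge u_p+t+2$, so the slice $s=u_p+t+1$ lies in neither slab. It also lies in $[-t,t]$, since $u_p\le-2$ gives $s\le t-1$. That slice contains parity-valid points of $Q_0$, for instance $v\in\{-t,-t+1\}$ with the right parity. These points would be uncovered, contradicting that the repair is valid. Hence $u_q-t\le u_p+t$.

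\textbf{Step 3: count the forced slice.} The whole slice $u=s_0:=u_p+t$ therefore lies in both slabs. Since $-t\le s_0\le t-2$, it is a slice of $Q_0$, and so the overlap contains it. Its points are the $v\in[-t,t]$ with $v\equiv s_0\equiv t\pmod2$, namely $v=-t,-t+2,\dots,t$. That is exactly $t+1$ vertices, each of which corresponds to a genuine Gaussian vertex.

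The hypothesis $t\ge2$ is needed only so that two-replacement covers exist at all, by Lemma~\ref{lem:t1-exception}. The main obstacle is Step 1: deriving $v_p=v_q=0$ cleanly from the corner pairing. It also requires checking that the symmetry reductions of Lemma~\ref{lem:corner-pairing} (swapping $u,v$ and reflections) are isometries that preserve both $Q_0$ and its parity sublattice. I would verify this directly: each is induced by a Lee isometry of $\Z[i]$ fixing the origin.
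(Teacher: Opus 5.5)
Your proof is correct and follows essentially the same route as the paper. Corner pairing forces both rotated centers onto the axis $v=0$, parity makes $u_p,u_q$ even, the covering condition forces $u_q-u_p\le 2t$, and a full slice $u\equiv t\pmod 2$ with $t+1$ vertices then lies in both squares. Your Step 2 makes explicit the parity step behind the paper's ``hence $U_2-U_1\le 2t$'', since covering integer slices alone only gives $2t+1$. Choosing the endpoint slice $u=u_p+t$ also avoids the paper's case split. The one bound you leave unstated, $u_p\ge -2t$, follows from the $p$-square containing $u=-t$.
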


\begin{proof}
By Lemma~\ref{lem:corner-pairing}, it is enough to consider the case in which one replacement square covers the left corners of $Q_0$ and the other covers the right corners. Write their rotated centers as $(U_1,V_1)$ and $(U_2,V_2)$. To contain both left corners $(-t,-t)$ and $(-t,t)$, the first square must have $V_1=0$ and $U_1\le0$. To contain both right corners $(t,-t)$ and $(t,t)$, the second square must have $V_2=0$ and $U_2\ge0$.

Their projections on the $u$-axis are the intervals $[U_1-t,U_1+t]$ and $[U_2-t,U_2+t]$. Since the two balls cover every vertex of $Q_0$, these intervals cover every integer slice $-t\le u\le t$; hence
\begin{equation}
    U_2-U_1\le 2t.
\end{equation}
Therefore the overlap of the two $u$-intervals contains the integer interval
\begin{equation}
    [U_2-t,\,U_1+t].
\end{equation}
This interval is nonempty. Because $V_1=V_2=0$, both squares contain the full $v$-range $[-t,t]$ on every slice whose $u$-coordinate lies in this interval.

The centers of Gaussian vertices satisfy $U_j\equiv V_j\pmod2$ because $U_j=x_j+y_j$ and $V_j=x_j-y_j$ have the same parity for every integer Gaussian vertex $(x_j,y_j)$. Since $V_1=V_2=0$ and $0$ is even, the congruences $U_1\equiv V_1$ and $U_2\equiv V_2\pmod2$ imply that both $U_1$ and $U_2$ are even. If the interval has more than one integer value, it contains a value $c\equiv t\pmod2$. If the interval has exactly one value, then $U_2-U_1=2t$ and that value is $c=U_1+t=U_2-t$, again satisfying $c\equiv t\pmod2$. Thus the overlap contains a full slice $u=c$ with $c\equiv t\pmod2$.

On such a slice, the parity-valid values of $v$ in $[-t,t]$ are
\begin{equation}
    -t,-t+2,\ldots,t,
\end{equation}
so the slice contains exactly $t+1$ Gaussian vertices. Hence the two replacement balls overlap inside $B_t(0)$ on at least $t+1$ vertices.
\end{proof}

\begin{theorem}[Exact Gaussian minimum-overlap formula]\label{thm:omega-g}
For $t\ge2$, among all minimum-size one-failure repairs,
\begin{equation}
    \Omega_G(t)=t+1.
\end{equation}
\end{theorem}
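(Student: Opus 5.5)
The plan is to prove matching lower and upper bounds on $\Omega_G(t)$. By Theorem~\ref{thm:rho-g}, for $t\ge2$ the minimum-size repairs are exactly the two-replacement repairs $R=\{p,q\}$ covering $B_t(0)$. The quantity $\Omega_G(t)$ is therefore the minimum of $|B_t(p)\cap B_t(q)\cap B_t(0)|$ over all such valid pairs.

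\emph{Lower bound.} This is the step where the real content sits, and it is already supplied by Lemma~\ref{lem:forced-interface-slice}. The argument runs as follows.
\begin{itemize}
\item Lemma~\ref{lem:corner-pairing} reduces every covering pair, after interchanging $u,v$ and reflecting, to one square containing both left corners of $Q_0$ and the other containing both right corners.
\item This forces $V_1=V_2=0$, so both squares span the full $v$-range $[-t,t]$.
\item The covering condition forces the two $u$-intervals to overlap.
\item The parity argument on the even centers $U_1,U_2$ yields a full slice $u=c$ with $c\equiv t \pmod 2$.
\item Such a slice carries $t+1$ parity-valid vertices, all lying in $B_t(p)\cap B_t(q)\cap B_t(0)$.
\end{itemize}
Hence every minimum-size repair has overlap at least $t+1$, and $\Omega_G(t)\ge t+1$. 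The only point I would recheck is that the symmetries used in Lemma~\ref{lem:corner-pairing} preserve both the Lee ball $B_t(0)$ and the parity lattice. They do: they are the dihedral symmetries of the Gaussian grid fixing the origin.

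\emph{Upper bound.} I would use the diagonal pair $R_a^+=\{(-a,-a),(t-a,t-a)\}$ of Theorem~\ref{thm:g-two-replacement} and compute its overlap exactly. In rotated coordinates the centers are $(-2a,0)$ and $(2t-2a,0)$. A failed-cell point $(u,v)$ lies in the first ball iff $u\le t-2a$ and in the second iff $u\ge t-2a$. The overlap inside $B_t(0)$ is therefore exactly the slice $u=t-2a$. This slice lies in $[-t,t]$ precisely because $1\le a\le t$, and it satisfies $u\equiv t\pmod 2$. It thus contains the parity-valid values $v=-t,-t+2,\dots,t$, which is exactly $t+1$ vertices. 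For example, $a=1$ gives the slice $u=t-2$, matching Example~\ref{ex:t2-repair} when $t=2$.

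Combining the two bounds gives $\Omega_G(t)=t+1$. The main obstacle, the lower bound, has been isolated into the preceding lemmas, so the remaining work is the explicit count for the construction. The only care needed is the parity bookkeeping: the vertices with $u=t-2a$ and $|v|\le t$ must satisfy $v\equiv t \pmod 2$, so exactly $t+1$ of them are Gaussian vertices.
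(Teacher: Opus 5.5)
Your proposal is correct and follows the paper's proof. The lower bound is exactly Lemma~\ref{lem:forced-interface-slice}, built on Lemma~\ref{lem:corner-pairing}. The upper bound is the diagonal pair $R_a^+$, whose balls meet inside $B_t(0)$ exactly on the slice $u=t-2a$ with its $t+1$ parity-valid vertices; you verify this more explicitly than the paper, which only asserts it.

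One bookkeeping nuance in your bullet ``the covering condition forces the two $u$-intervals to overlap'': covering all integer slices $-t\le u\le t$ only gives $U_2-U_1\le 2t+1$. It is the evenness of $U_1,U_2$ that excludes $2t+1$ and makes the intervals genuinely overlap, so the parity argument is needed there as well as in choosing $c$.
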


\begin{proof}
By Theorem~\ref{thm:rho-g}, a minimum repair has two replacements. Lemma~\ref{lem:forced-interface-slice} shows that any two-replacement repair has at least $t+1$ overlapped vertices in the failed cell. The construction in Theorem~\ref{thm:g-two-replacement} is tight: its two balls meet inside $B_t(0)$ exactly on the interface slice $u=t-2a$, which has $t+1$ parity-valid vertices. Therefore the minimum overlap is exactly $t+1$.
\end{proof}

\section{Exact Two-Fault Additive Repair}\label{sec:two-fault}

For two failed resources $r_1,r_2$, let
\begin{equation}
    U_2(r_1,r_2)=B_t(r_1)\cup B_t(r_2).
\end{equation}

\begin{figure}[H]
\centering
\safeincludegraphics[width=0.55\linewidth]{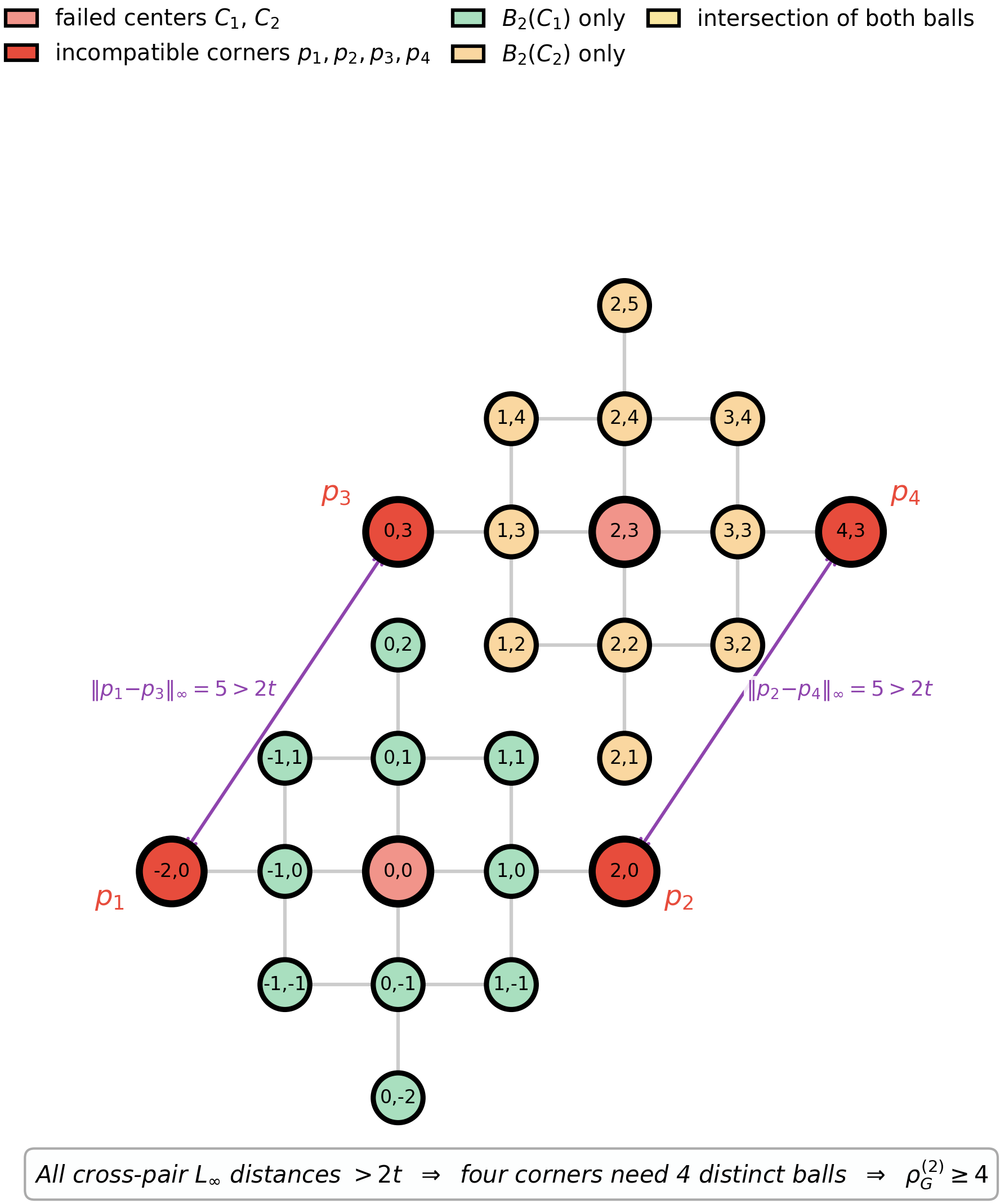}
\caption{Two-fault side-neighbor obstruction. Four marked corners $p_1,\ldots,p_4$ force four distinct replacement balls in the side-neighbor case of Lemma~\ref{lem:side-adjacent-four-corners}.}
\label{fig:two-fault-side}
\end{figure}

\subsection{Rotated-square representation}

We use the rotated coordinates
\begin{equation}
    u=x+y,\qquad v=x-y.
\end{equation}
In these coordinates the Lee ball centered at $(U_0,V_0)$ is the parity-constrained square
\begin{equation}
\begin{aligned}
    Q_t(U_0,V_0)=\{(u,v):{}& |u-U_0|\le t,\ |v-V_0|\le t,\\
    & u\equiv v\pmod 2\}.
\end{aligned}
\end{equation}

In the dense Gaussian perfect placement generated by $t+(t+1)i$, the resource-center lattice has rotated-coordinate generators
\begin{equation}
    e_1=(2t+1,-1),\qquad e_2=(-1,-(2t+1)).
\end{equation}

\begin{theorem}[Two-fault additive construction]\label{thm:two-fault-additive-upper}
For $t\ge2$, any two failed Gaussian resource cells can be repaired using at most four replacement nodes.
\end{theorem}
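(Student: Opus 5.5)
The plan is a direct construction: repair each failed cell separately with the two-replacement pair of Theorem~\ref{thm:g-two-replacement}, and take the union. Let the failed resources be $r_1,r_2$. Since the Lee metric is translation invariant, the pairs $r_j+R_{a}^{+}$ (or $r_j+R_a^-$) cover $B_t(r_j)$ for every $1\le a\le t$. Hence
\[
R=(r_1+R_{a_1}^{\pm})\cup(r_2+R_{a_2}^{\pm})
\]
covers $U_2(r_1,r_2)=B_t(r_1)\cup B_t(r_2)$ and has at most four elements. Any coincidences among the four nodes only decrease the count, which is consistent with the ``at most four'' claim.

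The only real obstacle is admissibility. The repair model requires replacements to be \emph{non-failed} nodes, so none of the four chosen nodes may equal $r_1$ or $r_2$. By the explicit form of the construction, $r_j+R_a^{\pm}$ never contains $r_j$ itself: its two nodes lie at distances $2a\ge2$ and $2(t-a)$ from $r_j$. When $a=t$, however, the second node is $r_j$, so I will restrict to $1\le a\le t-1$, which is possible because $t\ge2$. It therefore remains to ensure that $r_1+R_{a_1}$ avoids $r_2$, and symmetrically.

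Every node of $r_1+R_{a}^{\pm}$ lies within Lee distance $2\max(a,t-a)\le 2t-2<2t$ of $r_1$. Distinct resource centers in a perfect $t$-placement have disjoint $t$-balls, so $d_G(r_1,r_2)\ge 2t+1$. This is visible from the rotated generators $e_1=(2t+1,-1)$ and $e_2=(-1,-(2t+1))$: the shortest lattice vectors have Chebyshev norm $2t+1$ in $(u,v)$, and Lee distance equals that Chebyshev norm. Hence no replacement chosen for $r_1$ coincides with $r_2$, and vice versa.

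In a finite quotient network one should also check that distances are measured modulo the generator. This is the step I expect to require the most care. The key point is that the same disjointness of balls holds in the quotient because the placement is perfect there, so the same inequality applies. One degenerate case must be noted: if the quotient has very few resources (for instance $N=2t^2+2t+1$ gives a single resource), a ``pair of failed cells'' presupposes two distinct resources, and the argument applies whenever they exist. Finally, the replacements are distinct from the failed nodes and cover $U_2$, completing the proof with $|R|\le4$.
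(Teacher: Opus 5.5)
Your proposal is correct and follows the paper's proof: translate the two-replacement pair of Theorem~\ref{thm:g-two-replacement} to each failed center and take the union. Your admissibility check goes beyond what the paper's one-line proof states. Restricting to $1\le a\le t-1$ excludes $a=t$, where $(t-a,t-a)$ is the failed center itself, and comparing $d_G(r_1,r_2)\ge 2t+1$ with replacement distances at most $2t-2$ rules out placing a replacement on the other failed resource.
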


\begin{proof}
Translate the two-replacement construction of Theorem~\ref{thm:g-two-replacement} to $r_1$ and to $r_2$. The first translated pair covers $B_t(r_1)$ and the second translated pair covers $B_t(r_2)$. Their union therefore covers $U_2(r_1,r_2)$. Hence four replacements always suffice.
\end{proof}

\begin{lemma}[Relevant neighboring displacements]\label{lem:relevant-two-fault-displacements}
Let $t\ge2$. If a single replacement ball can intersect both failed cells, then, up to sign, rotation, and reflection of the rotated square coordinates, the displacement between the two failed resource centers is one of
\begin{equation}
    (2t+1,-1)
    \quad\text{or}\quad
    (2t+2,2t).
\end{equation}
\end{lemma}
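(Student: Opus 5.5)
The plan is to turn the hypothesis into a norm bound on the displacement, and then enumerate the resource-lattice vectors that satisfy that bound. We work in the rotated coordinates $u=x+y$, $v=x-y$. There the Lee distance between two Gaussian vertices equals the $\ell_\infty$ distance $\max\{|\Delta u|,|\Delta v|\}$ of their images. This holds because $|x|+|y|=\max\{|x+y|,|x-y|\}$.

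\textbf{Step 1: norm bound.} Suppose a replacement node $x$ has a ball meeting both $B_t(r_1)$ and $B_t(r_2)$. By the triangle-inequality argument of Theorem~\ref{thm:candidate-locality}, $d_G(x,r_1)\le 2t$ and $d_G(x,r_2)\le 2t$, hence $d_G(r_1,r_2)\le 4t$. Since the placement is a translate of a lattice, $\Delta=r_2-r_1$ is a nonzero lattice vector. In rotated coordinates it can be written as
\[
\Delta=m e_1+n e_2=\bigl(m(2t+1)-n,\; -m-n(2t+1)\bigr),\qquad (m,n)\in\Z^2\setminus\{(0,0)\}.
\]
The condition becomes $\max\{|m(2t+1)-n|,\,|m+n(2t+1)|\}\le 4t$.

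\textbf{Step 2: bound the coefficients.} We show this forces $|m|,|n|\le1$. By the symmetry $(m,n)\mapsto(n,-m)$ explained below, it suffices to treat $|m|\ge|n|$. Then
\[
|u|\ge |m|(2t+1)-|n|\ge 2t|m|.
\]
\begin{itemize}
\item If $|m|\ge3$, this already gives $|u|\ge 6t>4t$.
\item If $|m|=2$, we may take $m=2$ after a global sign change, and $|n|\le2$. Direct evaluation gives $|u|=4t+1,\,4t+3,\,4t+4$ for $n=1,-1,-2$. For $n=2$ we get $|v|=4t+4$. In every case the norm exceeds $4t$.
\end{itemize}
The case $|n|\ge|m|$ is symmetric, using $v$ in place of $u$. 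This coefficient bound is the step I expect to be the main obstacle. A naive Euclidean estimate only gives $\max^2\ge (m^2+n^2)\bigl((2t+1)^2+1\bigr)/2$, which is too weak for large $t$. That is why I plan the $|m|\ge|n|$ split with a one-coordinate bound and finite checking of the boundary cases.

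\textbf{Step 3: the eight survivors.} The nonzero vectors with $|m|,|n|\le1$ are $\pm e_1$, $\pm e_2$, $\pm(e_1+e_2)$ and $\pm(e_1-e_2)$, namely
\[
\pm(2t+1,-1),\quad \pm(-1,-(2t+1)),\quad \pm(2t,-(2t+2)),\quad \pm(2t+2,2t).
\]
Multiplication by $i$ acts in rotated coordinates as $(u,v)\mapsto(v,-u)$. It sends $e_1$ to $e_2$, which is also the symmetry $(m,n)\mapsto(n,-m)$ used in Step 2. So $\pm e_2$ reduces to $(2t+1,-1)$ by rotation and sign. Likewise $e_1+e_2=(2t,-(2t+2))$ is sent by the swap $u\leftrightarrow v$ followed by a sign change to $(2t+2,-2t)$, and by a reflection $v\mapsto -v$ to $(2t+2,2t)=e_1-e_2$.

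Rotation by $i$ and the coordinate swap $\Phi$ are Gaussian isometries. Sign change and reflection preserve the rotated-square picture and coverage multiplicities, in the same way as in Proposition~\ref{prop:companion-symmetry}. So every relevant displacement reduces to $(2t+1,-1)$ or $(2t+2,2t)$, as the statement claims.
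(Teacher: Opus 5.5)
Your proof is correct and follows the same three-step outline as the paper: the bound $\|D\|_\infty\le 4t$, the coefficient bound $|m|,|n|\le 1$, and a symmetry reduction of the eight surviving vectors. The difference is in how you get the coefficient bound.

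The paper inverts the basis directly. This is possible because $e_1\perp e_2$ with $\|e_1\|_2^2=\|e_2\|_2^2=a^2+1$, where $a=2t+1$. It writes $m=(aD_u-D_v)/(a^2+1)$ and $n=-(D_u+aD_v)/(a^2+1)$, then bounds both numerators by $4t(a+1)=8t(t+1)<2(a^2+1)$. That estimate pairs the $\ell_\infty$ bound on $D$ with the $\ell_1$ norm of $(a,-1)$. This is why it succeeds where your Euclidean estimate fails, and it gives $|m|,|n|<2$ uniformly with no case analysis.

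You instead use the diagonal dominance of the basis ($2t+1$ against $1$) through the reverse triangle inequality on the larger coefficient, and then settle $|m|=2$ by explicit evaluation. This is more elementary but needs a finite check. In that check you skipped $n=0$, which gives $|u|=4t+2$. Your own bound $|u|\ge 4t+2-|n|$ already handles $|n|\le 1$, so only $n=\pm 2$ actually required computation.

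There are two harmless labeling slips:
\begin{itemize}
\item The map $(m,n)\mapsto(n,-m)$ is multiplication by $-i$, not $i$.
\item The swap $u\leftrightarrow v$ is complex conjugation; it is $v\mapsto -v$ that corresponds to $\Phi$.
\end{itemize}

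Incidentally, $i(e_1+e_2)=-(e_1-e_2)$, so the diagonal case needs only rotation and sign. That keeps the reduction inside the symmetry group of the resource lattice itself; conjugation instead maps the ideal $(t+(t+1)i)$ to the companion ideal.
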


\begin{proof}
Let the two failed resource centers have rotated displacement $D=(D_u,D_v)$. A replacement ball of radius $t$ can intersect both failed cells only if the two failed radius-$t$ squares are within $2t$ of one another in the $L_\infty$ metric. Hence
\begin{equation}
    \|D\|_\infty\le 4t.
\end{equation}
Write
\begin{equation}
    D=m(2t+1,-1)+n(-1,-(2t+1)),
\end{equation}
where $m,n\in\mathbb Z$. Put $a=2t+1$. Then
\begin{equation}
    D_u=am-n,\qquad D_v=-m-an.
\end{equation}
Solving gives
\begin{equation}
    m=\frac{aD_u-D_v}{a^2+1},
    \qquad
    n=-\frac{D_u+aD_v}{a^2+1}.
\end{equation}
Since $|D_u|,|D_v|\le4t$,
\begin{equation}
    |aD_u-D_v|\le 4t(a+1)=8t(t+1)<2(a^2+1),
\end{equation}
for $a=2t+1$. Therefore $|m|<2$. Similarly, $|n|<2$. Hence $m,n\in\{-1,0,1\}$.

The nonzero possibilities are $(\pm1,0)$, $(0,\pm1)$, and $(\pm1,\pm1)$. The cases $(\pm1,0)$ and $(0,\pm1)$ are equivalent under the isometry of the rotated-square model that interchanges the $u$- and $v$-axes and then reflects signs as needed. Therefore they give the same side-neighbor displacement type $(2t+1,-1)$ up to the allowed lattice symmetries. The cases $(\pm1,\pm1)$ are likewise equivalent and give the displacement type $(2t+2,2t)$.
\end{proof}

\begin{lemma}[Four incompatible corners for side-neighbor cells]\label{lem:side-adjacent-four-corners}
Let the two failed cells have centers
\begin{equation}
    C_1=(0,0),\qquad C_2=(2t+1,-1)
\end{equation}
in rotated coordinates. Then any repair requires at least four replacement balls.
\end{lemma}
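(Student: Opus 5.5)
The plan is to exhibit four vertices $p_1,\dots,p_4$ of the failed region $U_2(C_1,C_2)$ such that no admissible replacement ball contains two of them. Every repair must then use at least four balls. I will work entirely in the rotated-square model: a replacement ball is a square $Q_t(U,V)$ of side $2t$. It contains a point exactly when the $L_\infty$ distance from $(U,V)$ to that point is at most $t$. Hence a single ball can contain two points only if their $L_\infty$ distance is at most $2t$. The only constraint on centers is that a center may not be a failed resource, so $(U,V)\neq C_1,C_2$.

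The failed cells are $Q_0=[-t,t]^2$ and $Q_t(2t+1,-1)=[t+1,3t+1]\times[-t-2\cdot0-1,t-1]$ (parity-restricted). Following the pattern of Lemma~\ref{lem:corner-pairing}, I take one pair of opposite corners from each cell:
\[
p_1=(-t,-t),\quad p_2=(t,t),\quad p_3=(3t+1,t-1),\quad p_4=(t+1,-t-1).
\]
First I check that these are genuine Gaussian vertices, i.e.\ $u\equiv v\pmod 2$. This holds because the coordinate differences are $0$ or $2t+2$.

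Next I compute the pairwise $L_\infty$ distances. For $(p_1,p_3)$, $(p_1,p_4)$ and $(p_2,p_3)$ the $u$-gap is at least $2t+1$. For $(p_2,p_4)$ the $u$-gap is only $1$, but the $v$-gap is $2t+1$. So none of these four pairs fits in one square.

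The remaining pairs $(p_1,p_2)$ and $(p_3,p_4)$ are at $L_\infty$ distance exactly $2t$ in both coordinates. Here I reuse the argument of Lemma~\ref{lem:corner-pairing}: a side-$2t$ axis-aligned square containing two points that differ by exactly $2t$ in both $u$ and $v$ is forced to have those points as opposite corners. Its center is therefore their midpoint. For $(p_1,p_2)$ the midpoint is $C_1=(0,0)$, and for $(p_3,p_4)$ it is $(2t+1,-1)=C_2$. Both are failed resources and cannot serve as replacements.

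With all six pairs excluded, each replacement ball covers at most one $p_j$, so at least four balls are needed. I expect the main obstacle to be simply finding a quadruple that works. The natural first choice, the two left corners of $Q_0$ together with the two right corners of the second cell, fails. The left corners are only $2t$ apart along one coordinate, so many squares with $V=0$ contain both. Mixing opposite corners of each cell, oriented so that the nearby corners $p_2$ and $p_4$ are separated vertically by $2t+1$, resolves this. Once the points are fixed, the remaining verification is routine arithmetic.
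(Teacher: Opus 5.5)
Your proof is correct and is essentially the paper's argument. You use the same four opposite-corner vertices (your $p_3,p_4$ are the paper's $p_4,p_3$). You rule out the two same-cell pairs because the only square containing both points is centered at the failed midpoint $C_1$ or $C_2$. You rule out the four cross pairs because their $L_\infty$ distances ($4t+1$, and $2t+1$ three times) exceed $2t$; your explicit midpoint computation simply spells out the paper's remark that the only square containing two opposite corners of a failed cell is that cell itself.
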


\begin{proof}
Let $Q_1=Q_t(C_1)$ and $Q_2=Q_t(C_2)$. Consider the four failed-cell corner vertices
\begin{align}
    p_1&=(-t,-t),&
    p_2&=(t,t),\nonumber\\
    p_3&=(t+1,-t-1),&
    p_4&=(3t+1,t-1).
\end{align}
The points $p_1,p_2$ are opposite corners of $Q_1$, and $p_3,p_4$ are opposite corners of $Q_2$. All four points satisfy the required parity condition $u\equiv v\pmod2$.

A radius-$t$ square can cover two points only if their $L_\infty$ distance is at most $2t$. Moreover, the only radius-$t$ square containing two opposite corners of the same failed square is the failed square itself. Hence $p_1$ and $p_2$ can be covered together only by the failed center $C_1$, and $p_3$ and $p_4$ can be covered together only by the failed center $C_2$. Both centers are unavailable.

For the cross pairs,
\begin{align}
    \|p_1-p_3\|_\infty &= 2t+1,&
    \|p_1-p_4\|_\infty &= 4t+1,\nonumber\\
    \|p_2-p_3\|_\infty &= 2t+1,&
    \|p_2-p_4\|_\infty &= 2t+1.
\end{align}
Each value is greater than $2t$. Therefore no valid radius-$t$ replacement ball can cover any cross pair. Thus the four vertices $p_1,p_2,p_3,p_4$ must be covered by four distinct replacement balls, and at least four replacements are necessary.
\end{proof}

\begin{lemma}[Four incompatible corners for diagonal-neighbor cells]\label{lem:diagonal-adjacent-four-corners}
Let the two failed cells have centers
\begin{equation}
    C_1=(0,0),\qquad C_2=(2t+2,2t)
\end{equation}
in rotated coordinates. Then any repair requires at least four replacement balls.
\end{lemma}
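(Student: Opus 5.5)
The plan is to copy the obstruction argument of Lemma~\ref{lem:side-adjacent-four-corners}: find four vertices of the failed region $Q_1\cup Q_2$, with $Q_1=Q_t(C_1)$ and $Q_2=Q_t(C_2)$, no two of which can lie in a common admissible replacement square. Each of the four then needs its own replacement ball, so at least four are required.

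\emph{Choice of corners.} The corners of $Q_1$ are $(\pm t,\pm t)$. The corners of $Q_2$ are $(t+2,t)$, $(3t+2,t)$, $(t+2,3t)$ and $(3t+2,3t)$. The natural first choice, the main-diagonal corners of both squares, fails: $(t,-t)$ and $(t+2,t)$ are at $L_\infty$ distance exactly $2t$. I would therefore take the anti-diagonal pair in each square:
\[
p_1=(-t,t),\quad p_2=(t,-t),\quad p_3=(t+2,3t),\quad p_4=(3t+2,t).
\]
All four satisfy $u\equiv v\pmod 2$, since $t+2\equiv t$ and $3t\equiv t+2\pmod 2$. So they are genuine Gaussian vertices of the failed region.

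\emph{Opposite-corner pairs.} $p_1,p_2$ are opposite corners of $Q_1$, and $p_3,p_4$ are opposite corners of $Q_2$. As in the proof of Lemma~\ref{lem:corner-pairing}, the only axis-aligned square of side $2t$ containing two opposite corners of a side-$2t$ square is that square itself. Hence covering $p_1,p_2$ together forces the center $C_1$, and covering $p_3,p_4$ together forces $C_2$. Both centers are failed, so neither is allowed.

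\emph{Cross pairs.} A radius-$t$ square contains two points only if their $L_\infty$ distance is at most $2t$. The cross distances are
\[
\|p_1-p_3\|_\infty=\|(2t+2,2t)\|_\infty=2t+2,\qquad \|p_1-p_4\|_\infty=4t+2,
\]
\[
\|p_2-p_3\|_\infty=\|(2,4t)\|_\infty=4t,\qquad \|p_2-p_4\|_\infty=\|(2t+2,2t)\|_\infty=2t+2.
\]
Each exceeds $2t$, so no replacement ball covers any cross pair.

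\emph{Conclusion.} No replacement ball covers two of $p_1,\dots,p_4$, so any valid repair uses at least four balls. The main obstacle is the choice of corners itself, because the obvious diagonal choice produces a cross pair at distance exactly $2t$. Once the anti-diagonal corners are fixed, the remaining steps are the short parity and distance checks above.
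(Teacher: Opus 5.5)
Your proof is correct and is essentially the paper's argument: the paper uses exactly the same four anti-diagonal corners $(-t,t)$, $(t,-t)$, $(t+2,3t)$, $(3t+2,t)$, the same observation that a radius-$t$ square containing two opposite corners must be the failed square itself, and the same cross distances $2t+2$, $4t+2$, $4t$, $2t+2$. One harmless slip in your motivating remark: the main-diagonal choice actually fails because $(t,t)$ and $(t+2,t)$ are at distance $2$, whereas the pair $(t,-t),(t+2,t)$ you cite mixes an anti-diagonal corner of $Q_1$ with a main-diagonal corner of $Q_2$; this does not affect the proof itself.
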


\begin{proof}
Let $Q_1=Q_t(C_1)$ and $Q_2=Q_t(C_2)$. Consider the four failed-cell corner vertices
\begin{align}
    q_1&=(-t,t),&
    q_2&=(t,-t),\nonumber\\
    q_3&=(t+2,3t),&
    q_4&=(3t+2,t).
\end{align}
The points $q_1,q_2$ are opposite corners of $Q_1$, and $q_3,q_4$ are opposite corners of $Q_2$. Again, all four points satisfy $u\equiv v\pmod2$.

The pair $q_1,q_2$ can be covered together only by the failed center $C_1$, and the pair $q_3,q_4$ can be covered together only by the failed center $C_2$. Both centers are unavailable. For the cross pairs,
\begin{align}
    \|q_1-q_3\|_\infty &= 2t+2,&
    \|q_1-q_4\|_\infty &= 4t+2,\nonumber\\
    \|q_2-q_3\|_\infty &= 4t,&
    \|q_2-q_4\|_\infty &= 2t+2.
\end{align}
Each value is greater than $2t$ for $t\ge2$. Therefore no valid replacement ball can cover a cross pair. The four vertices $q_1,q_2,q_3,q_4$ require four distinct replacement balls, and every repair has size at least four.
\end{proof}

\begin{lemma}[Far failed cells require independent repairs]\label{lem:far-cells-independent}
If two failed resource centers have rotated displacement $D$ with
\begin{equation}
    \|D\|_\infty>4t,
\end{equation}
then any replacement ball intersects at most one failed cell. Consequently, at least four replacement balls are required.
\end{lemma}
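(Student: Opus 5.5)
The plan has two parts. First I show that no replacement ball can meet both failed cells. Then I apply the one-fault lower bound to each cell separately.

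\textbf{No ball meets both cells.} Work in the rotated coordinates \eqref{eq:uv-transform}, where each failed cell is a parity-constrained square $Q_t(C_j)$ and a replacement ball centered at $X$ is $Q_t(X)$. Suppose $Q_t(X)$ contains a point $p\in Q_t(C_1)$ and a point $p'\in Q_t(C_2)$. Since the $L_\infty$ radius of each square is $t$, we have $\|X-C_1\|_\infty\le \|X-p\|_\infty+\|p-C_1\|_\infty\le 2t$, and likewise $\|X-C_2\|_\infty\le 2t$. The triangle inequality in $L_\infty$ then gives $\|D\|_\infty=\|C_1-C_2\|_\infty\le 4t$, which contradicts the hypothesis. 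The parity constraint only removes points, so the argument is unaffected by it. This is the same inequality used at the start of the proof of Lemma~\ref{lem:relevant-two-fault-displacements}, now run in the contrapositive direction.

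\textbf{Counting replacements per cell.} Let $R$ be any valid repair and split it as $R_1\cup R_2$, where $R_j$ consists of the replacements whose balls meet $B_t(r_j)$. By the first step, $R_1\cap R_2=\emptyset$. Any element of $R$ meeting neither cell can be discarded without affecting validity. Since $R$ covers $U_2(r_1,r_2)$, each set $R_j$ on its own covers $B_t(r_j)$, and by construction no element of $R_j$ is the failed center $r_j$ or the other failed center. Each $R_j$ is therefore a valid one-failure repair of $B_t(r_j)$ after translating $r_j$ to the origin. Theorem~\ref{thm:rho-g} then gives $|R_j|\ge\rho_G(t)$, which equals $2$ for $t\ge2$ and $3$ for $t=1$. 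Hence $|R|=|R_1|+|R_2|\ge 4$; for $t=1$ the bound is $6\ge4$.

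\textbf{Main point to check.} The one delicate step is that $R_j$ qualifies as a one-fault repair in the sense of Lemma~\ref{lem:one-replacement-impossible-g}. That lemma only needs the replacement center to differ from the failed center. This holds here because every element of $R$ is a non-failed node. Note also that even if $R_1$ were to contain the other failed center $r_2$, that would not matter for the cardinality bound. Everything else is a routine triangle-inequality computation.
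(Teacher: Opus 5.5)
Your proof is correct and follows the paper's argument: the $L_\infty$ triangle inequality in rotated coordinates rules out any replacement ball meeting both failed cells, and then applying the one-fault bound from Theorem~\ref{thm:rho-g} to each cell separately gives at least four. The explicit partition $R=R_1\cup R_2$ and the observation that $t=1$ gives $6\ge4$ are details the paper's shorter proof leaves implicit, since the paper only invokes the $t\ge2$ case.
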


\begin{proof}
If a radius-$t$ replacement ball intersected both failed cells, then the two failed radius-$t$ squares would be within $2t$ of each other in the $L_\infty$ metric. Since each failed cell itself has radius $t$, this would imply $\|D\|_\infty\le4t$, contradicting the hypothesis. Thus every replacement ball can contribute to at most one failed cell. By Theorem~\ref{thm:rho-g}, one failed Gaussian cell requires at least two replacement balls for $t\ge2$. Therefore two such cells require at least four replacements.
\end{proof}

\begin{theorem}[Exact two-fault Gaussian additivity]\label{thm:two-fault-exact}
For every $t\ge2$ and every pair of failed resource centers in the dense Gaussian perfect placement,
\begin{equation}
    \rho_G^{(2)}(t,\Delta)=4.
\end{equation}
\end{theorem}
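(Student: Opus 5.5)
The plan is to combine the upper bound of Theorem~\ref{thm:two-fault-additive-upper} with a lower bound of four obtained by a case split on the rotated displacement $D$ between the two failed resource centers. The upper bound is immediate: for $t\ge2$, translating the two-replacement construction of Theorem~\ref{thm:g-two-replacement} to each failed center yields four replacement nodes whose balls cover $U_2(r_1,r_2)$. Hence $\rho_G^{(2)}(t,\Delta)\le4$ for every pair, and all the work is in proving $\rho_G^{(2)}(t,\Delta)\ge4$.

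For the lower bound, I would split according to whether $\|D\|_\infty>4t$ or $\|D\|_\infty\le4t$.

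\emph{Far case.} If $\|D\|_\infty>4t$, Lemma~\ref{lem:far-cells-independent} applies directly. No replacement ball meets both cells, and each cell needs two replacements by Theorem~\ref{thm:rho-g}.

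\emph{Near case.} If $\|D\|_\infty\le4t$, the argument in the proof of Lemma~\ref{lem:relevant-two-fault-displacements} shows that $D=m e_1+n e_2$ with $m,n\in\{-1,0,1\}$ and $(m,n)\neq(0,0)$. The symmetries of the rotated-square model (sign changes and interchange of $u$ and $v$) preserve $L_\infty$ distances, parity squares, and the set of available replacement centers. Using them, together with relabeling which failed center is the origin, I reduce to $D=(2t+1,-1)$ or $D=(2t+2,2t)$. These are handled by Lemmas~\ref{lem:side-adjacent-four-corners} and~\ref{lem:diagonal-adjacent-four-corners}, each of which exhibits four corner vertices such that no available replacement ball covers two of them.

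The main obstacle is making this reduction airtight. I need to check that each of the eight nonzero $(m,n)$ displacements really maps to one of the two canonical ones under an allowed symmetry, since the reflection or axis swap must carry the resource lattice of the failed pair to the canonical configuration. The simplest route is to observe that only the two failed centers and their cells matter. The incompatible-corner argument uses only the positions of the two failed squares, the parity constraint $u\equiv v\pmod 2$, and the fact that the two failed centers are unavailable. It never uses the rest of the lattice. So any isometry of the $(u,v)$ plane that preserves parity and carries the pair $\{C_1,C_2\}$ to a canonical pair is sufficient.

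The displacements split as follows.
\begin{itemize}
\item $\pm e_1=\pm(2t+1,-1)$ is the side-neighbor case directly.
\item $\pm e_2=\pm(-1,-(2t+1))$ is carried to $(2t+1,\mp1)$ by swapping $u$ and $v$ and then reflecting signs.
\item $e_1-e_2=(2t+2,2t)$ is the diagonal-neighbor case directly, and its negative is handled by swapping the roles of $C_1$ and $C_2$.
\item $e_1+e_2=(2t,-2t-2)$ and its negative are carried to $(2t+2,2t)$ after swapping axes and reflecting.
\end{itemize}
Parity is preserved by all of these maps, since each one either keeps or swaps the coordinates and changes signs. Corner $L_\infty$ distances are also preserved.

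In every case the lower bound is therefore four. Combined with the upper bound, this gives $\rho_G^{(2)}(t,\Delta)=4$.
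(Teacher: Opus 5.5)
Your proposal is correct and follows the paper's proof essentially step for step. The translated one-fault construction gives the upper bound, and the lower bound splits into the far case (Lemma~\ref{lem:far-cells-independent}) and the near case, which the paper reduces via Lemma~\ref{lem:relevant-two-fault-displacements} to the two canonical displacements handled by the four-incompatible-corner lemmas. Your additions tighten the paper's terse ``up to symmetry'' step without changing the argument: you enumerate the eight nonzero displacements explicitly, and you observe that the corner argument depends only on the two failed squares, so the $u\leftrightarrow v$ swap is still a legitimate reduction even though it does not preserve the resource lattice $\alpha\Z[i]$.
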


\begin{proof}
The upper bound $\rho_G^{(2)}(t,\Delta)\le4$ follows from Theorem~\ref{thm:two-fault-additive-upper}. It remains to prove the lower bound.

Let $D$ be the rotated displacement between the two failed resource centers. If $\|D\|_\infty>4t$, Lemma~\ref{lem:far-cells-independent} gives the lower bound four. Otherwise, $\|D\|_\infty\le4t$. By Lemma~\ref{lem:relevant-two-fault-displacements}, up to the symmetries of the Gaussian lattice, $D$ is either $(2t+1,-1)$ or $(2t+2,2t)$. In the first case, Lemma~\ref{lem:side-adjacent-four-corners} gives the lower bound four. In the second case, Lemma~\ref{lem:diagonal-adjacent-four-corners} gives the lower bound four.

Thus every two-fault repair requires at least four replacement balls. Since four always suffice, the exact value is $\rho_G^{(2)}(t,\Delta)=4$.
\end{proof}

\section{Multi-Fault Overlap Accounting}\label{sec:dense-core}

\begin{figure}[H]
\centering
\safeincludegraphics[width=0.65\linewidth]{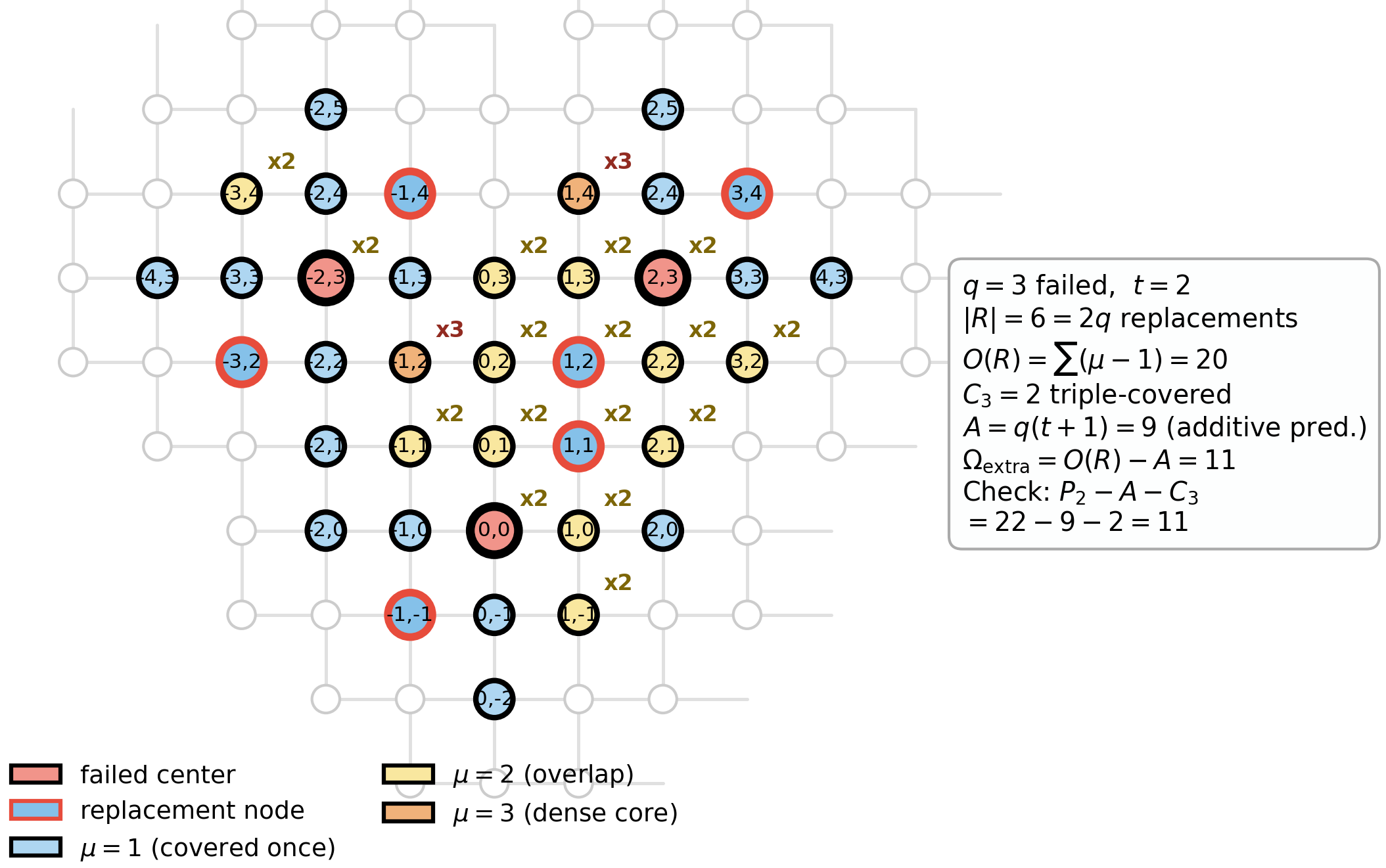}
\caption{Multi-fault overlap and dense cores. For $q=3$ or $q=4$, overlapping replacement squares can create vertices covered by three or more balls; these vertices are exactly what the inclusion--exclusion correction accounts for.}
\label{fig:multi-fault-core}
\end{figure}

Let
\begin{equation}
    U(F)=\bigcup_{r\in F}B_t(r)
\end{equation}
be the failed region and let $R$ be any valid local replacement set. For $z\in U(F)$ define the replacement multiplicity
\begin{equation}
    \mu_R(z)=|\{x\in R:z\in B_t(x)\}|.
\end{equation}
Since $R$ is a valid repair, $\mu_R(z)\ge1$ for all $z\in U(F)$. The true overlap mass inside the failed region is
\begin{equation}\label{eq:true-overlap-mass}
    O(R)=\sum_{z\in U(F)}(\mu_R(z)-1).
\end{equation}
For $j\ge2$, define the $j$-fold coverage mass
\begin{equation}\label{eq:pj-def}
    P_j(R)=\sum_{z\in U(F)} \binom{\mu_R(z)}{j}.
\end{equation}

\begin{lemma}[Vertexwise binomial identity]\label{lem:vertex-binomial}
For every integer $m\ge1$,
\begin{equation}\label{eq:vertex-binomial}
    m-1=\sum_{j=2}^{m}(-1)^j \binom{m}{j}.
\end{equation}
\end{lemma}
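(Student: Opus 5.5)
The plan is to start from the binomial theorem evaluated at $-1$. For every integer $m\ge1$,
\begin{equation*}
    0=(1-1)^m=\sum_{j=0}^{m}(-1)^j\binom{m}{j}.
\end{equation*}
I will isolate the $j=0$ and $j=1$ terms, which equal $1$ and $-m$. Moving them to the other side gives
\begin{equation*}
    \sum_{j=2}^{m}(-1)^j\binom{m}{j}=m-1,
\end{equation*}
which is exactly \eqref{eq:vertex-binomial}.

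The only point needing care is the degenerate case $m=1$, where the right-hand sum is empty. There I will simply note that the empty sum is $0=m-1$, so the identity still holds; the binomial-theorem argument covers it as well.

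As an alternative, one can induct on $m$ using Pascal's rule. The step uses $\sum_{j\ge2}(-1)^j\binom{m}{j-1}=\sum_{k\ge1}(-1)^{k+1}\binom{m}{k}=1$, which again follows from $(1-1)^m=0$. I would present the direct binomial-theorem argument, since it is a one-line proof.

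There is no real obstacle here. The lemma will then be applied pointwise with $m=\mu_R(z)\ge1$ and summed over $z\in U(F)$. The extension of the sum to all $j\ge2$ is harmless because $\binom{m}{j}=0$ for $j>m$.
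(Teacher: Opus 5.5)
Your proposal is correct and matches the paper's proof: both expand $(1-1)^m=\sum_{j=0}^{m}(-1)^j\binom{m}{j}=0$ for $m\ge1$, separate the $j=0$ and $j=1$ terms $1$ and $-m$, and rearrange. Your extra remarks on the empty sum at $m=1$ and the Pascal-rule induction are also correct.
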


\begin{proof}
The binomial identity
\begin{equation}
    \sum_{j=0}^{m}(-1)^j\binom{m}{j}=0
\end{equation}
holds for $m\ge1$. Separating the $j=0$ and $j=1$ terms gives
\begin{equation}
    1-m+\sum_{j=2}^{m}(-1)^j\binom{m}{j}=0,
\end{equation}
which is equivalent to \eqref{eq:vertex-binomial}.
\end{proof}

\begin{theorem}[Exact multi-failure overlap identity]\label{thm:general-overlap-ie}
For any valid local repair $R$ of any failed resource set $F$,
\begin{equation}\label{eq:general-overlap-ie}
    O(R)=\sum_{j=2}^{M}(-1)^j P_j(R),
\end{equation}
where
\begin{equation}
    M=\max_{z\in U(F)}\mu_R(z).
\end{equation}
\end{theorem}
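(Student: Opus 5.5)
The identity is a direct consequence of Lemma~\ref{lem:vertex-binomial}, applied pointwise and then summed over the failed region. Since $R$ is a valid repair, every $z\in U(F)$ has $\mu_R(z)\ge1$, so the lemma applies with $m=\mu_R(z)$ and gives
\[
    \mu_R(z)-1=\sum_{j=2}^{\mu_R(z)}(-1)^j\binom{\mu_R(z)}{j}.
\]
The first step is to make the upper summation limit uniform. Because $\mu_R(z)\le M$ for every $z\in U(F)$, and $\binom{m}{j}=0$ whenever $j>m$, the inner sum can be extended to $j=2,\ldots,M$ without changing its value. This also covers the vertices with $\mu_R(z)=1$, for which both sides are zero.

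Next, sum this uniform vertexwise identity over $z\in U(F)$. The left-hand side becomes $O(R)$ by \eqref{eq:true-overlap-mass}. On the right-hand side, $U(F)$ is a finite set of failed-cell vertices and the range $2\le j\le M$ is finite, so the double sum can be exchanged. The exchange gives
\[
    \sum_{j=2}^{M}(-1)^j\sum_{z\in U(F)}\binom{\mu_R(z)}{j}=\sum_{j=2}^{M}(-1)^jP_j(R),
\]
by definition \eqref{eq:pj-def}, which is exactly \eqref{eq:general-overlap-ie}.

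I do not expect a real obstacle here. The only points that need care are checking that validity of $R$ guarantees $\mu_R\ge1$ throughout $U(F)$, which is the hypothesis of Lemma~\ref{lem:vertex-binomial}, and justifying the truncation at $M$ through the vanishing of the binomial coefficients. I would also note that $M$ is finite and well defined because $R$ and $U(F)$ are finite. If $M=1$, the sum is empty, and then $O(R)=0$, so the identity still holds.
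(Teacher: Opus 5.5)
Your proposal is correct and follows the paper's proof: apply Lemma~\ref{lem:vertex-binomial} with $m=\mu_R(z)$ at each vertex of $U(F)$, sum over the failed region, and exchange the finite sums. Your explicit extension of the inner sum to $j\le M$, using $\binom{m}{j}=0$ for $j>m$, justifies a step the paper's displayed chain uses without comment.
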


\begin{proof}
Apply Lemma~\ref{lem:vertex-binomial} to $m=\mu_R(z)$ at each failed-region vertex $z\in U(F)$ and sum over all $z$. Exchanging the finite sums gives
\begin{equation}
\begin{aligned}
    O(R)
    &=\sum_{z\in U(F)}(\mu_R(z)-1)\\
    &=\sum_{z\in U(F)}\sum_{j=2}^{\mu_R(z)}(-1)^j \binom{\mu_R(z)}{j}\\
    &=\sum_{j=2}^{M}(-1)^j\sum_{z\in U(F)}\binom{\mu_R(z)}{j}\\
    &=\sum_{j=2}^{M}(-1)^jP_j(R).
\end{aligned}
\end{equation}
\end{proof}

\begin{corollary}[Triple-core specialization]\label{cor:triple-core}
If a repair set $R$ satisfies
\begin{equation}
    \max_{z\in U(F)}\mu_R(z)\le3,
\end{equation}
then
\begin{equation}\label{eq:triple-core}
    O(R)=P_2(R)-P_3(R).
\end{equation}
Moreover, if $C_3(R)$ denotes the number of failed-region vertices covered by exactly three replacement balls, then $P_3(R)=C_3(R)$ and
\begin{equation}\label{eq:triple-core-c3}
    O(R)=P_2(R)-C_3(R).
\end{equation}
\end{corollary}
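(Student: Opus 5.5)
The corollary is a direct specialization of Theorem~\ref{thm:general-overlap-ie}, and I plan to prove it in three short steps.

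\emph{Step 1.} Set $M=\max_{z\in U(F)}\mu_R(z)$. Validity of $R$ gives $M\ge1$, and the hypothesis gives $M\le3$.

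\emph{Step 2.} Apply Theorem~\ref{thm:general-overlap-ie}, splitting on the value of $M$.
\begin{itemize}
\item If $M=3$, the sum in \eqref{eq:general-overlap-ie} is exactly $P_2(R)-P_3(R)$.
\item If $M\le2$, the identity gives $O(R)=P_2(R)$ (read as the empty sum $0$ when $M=1$). Here $\binom{\mu_R(z)}{3}=0$ for every $z$, so $P_3(R)=0$, and likewise $P_2(R)=0$ when $M=1$.
\end{itemize}
So in all cases $O(R)=P_2(R)-P_3(R)$. Alternatively, I can bypass the case split: apply Lemma~\ref{lem:vertex-binomial} vertexwise and note that $\binom{m}{j}=0$ for $j>m$. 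This lets every vertex sum run over $j=2,3$ uniformly.

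\emph{Step 3.} To identify $P_3$ with $C_3$, evaluate $\binom{\mu_R(z)}{3}$ pointwise. Since $1\le\mu_R(z)\le3$, it equals $1$ when $\mu_R(z)=3$ and $0$ otherwise. Summing over $U(F)$ gives $P_3(R)=C_3(R)$, and substituting yields \eqref{eq:triple-core-c3}.

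There is no substantive obstacle. The only care needed is the degenerate case $M<3$, where the upper limit in Theorem~\ref{thm:general-overlap-ie} is smaller than $3$. I handle this by observing that the missing terms vanish identically, rather than by relying on the theorem's index range.
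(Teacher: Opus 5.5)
Your proof is correct and matches the paper's argument: you specialize Theorem~\ref{thm:general-overlap-ie} to its $j=2$ and $j=3$ terms, then evaluate $\binom{\mu_R(z)}{3}$ pointwise to get $P_3(R)=C_3(R)$. Your explicit treatment of the degenerate case $M<3$, where the missing terms vanish identically, is a small extra bit of care that the paper's proof leaves implicit.
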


\begin{proof}
If $M\le3$, Theorem~\ref{thm:general-overlap-ie} has only the $j=2$ and $j=3$ terms. Also, when no vertex has multiplicity above three, $\binom{\mu_R(z)}{3}$ is one exactly at vertices of multiplicity three and zero elsewhere. Hence $P_3(R)=C_3(R)$.
\end{proof}

\begin{theorem}[Certified dense-core correction]\label{thm:dense-core}
For any canonical local repair instance for which the ground-truth certificate verifies
\begin{equation}
    \max_{z\in U(F)}\mu_R(z)\le3,
\end{equation}
the dense-core correction is exactly
\begin{equation}\label{eq:extra-core-final}
    \Omega_{\rm extra}(R)=P_2(R)-A(R)-C_3(R).
\end{equation}
\end{theorem}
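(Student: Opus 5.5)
The plan is to reduce Theorem~\ref{thm:dense-core} to Corollary~\ref{cor:triple-core} by a single subtraction. The symbols $\Omega_{\rm extra}(R)$ and $A(R)$ are not defined before the statement, so the first step is to fix their meaning consistently with the earlier sections. I will take $A(R)$ to be the additive baseline overlap of the canonical repair: the overlap that the canonical local repair would incur if each failed cell were repaired independently. Concretely, this is the sum over failed cells of the interface-slice overlaps from Theorem~\ref{thm:omega-g}, giving $t+1$ per cell for the translated diagonal pairs of Theorem~\ref{thm:g-two-replacement}. The dense-core correction is then the overlap mass in excess of this baseline, $\Omega_{\rm extra}(R)=O(R)-A(R)$, with $O(R)$ as in \eqref{eq:true-overlap-mass}. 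If the paper's later definition of $A(R)$ differs, for instance by also including pairwise cross-cell interfaces, the argument below is unchanged. The only property it needs is that $A(R)$ is a quantity computed from $R$ and the canonical instance that is subtracted from the true overlap mass.

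With this reading, the argument has three steps.
\begin{enumerate}
\item The hypothesis of the theorem is exactly that the ground-truth certificate verifies $\max_{z\in U(F)}\mu_R(z)\le3$. In that case $M\le3$ in Theorem~\ref{thm:general-overlap-ie}.
\item By Corollary~\ref{cor:triple-core}, $O(R)=P_2(R)-P_3(R)$. Moreover $P_3(R)=C_3(R)$, because $\binom{\mu}{3}$ is the indicator of $\mu=3$ whenever $\mu\le3$.
\item Substituting into the definition gives
\[
\Omega_{\rm extra}(R)=O(R)-A(R)=P_2(R)-C_3(R)-A(R),
\]
which is \eqref{eq:extra-core-final}. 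Validity of $R$ ensures $\mu_R\ge1$ on $U(F)$, so Lemma~\ref{lem:vertex-binomial} applies at every vertex and no boundary terms are lost.
\end{enumerate}

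The main obstacle is not algebraic. It is making precise what the certificate guarantees and what $A(R)$ is, so that the identity is an exact statement rather than an empirical one. For the certificate, I would argue that the multiplicity function $\mu_R$ on the finite set $U(F)$ is recomputed directly from lattice geometry, by testing $|u-U_x|\le t$ and $|v-V_x|\le t$ for each $x\in R$. The certified bound $\max\mu_R\le3$ is therefore a true fact about that instance, and the corollary applies to it deductively.

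I would also remark why the theorem must be conditional. If some vertex has $\mu_R(z)=4$, the term $P_4$ reappears and the compact formula fails. In that case one falls back on the full identity \eqref{eq:general-overlap-ie}, which gives $\Omega_{\rm extra}(R)=\sum_{j\ge2}(-1)^jP_j(R)-A(R)$.
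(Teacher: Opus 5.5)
Your proposal is correct and matches the paper's proof: the paper invokes Corollary~\ref{cor:triple-core} to get $O(R)=P_2(R)-C_3(R)$ and then subtracts the additive prediction $A(R)$, implicitly taking $\Omega_{\rm extra}(R)=O(R)-A(R)$ exactly as you do. You are right that the argument needs only this reading and no particular formula for $A(R)$, which the paper never formally defines.
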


\begin{proof}
By Corollary~\ref{cor:triple-core}, $O(R)=P_2(R)-C_3(R)$. Subtracting the additive prediction $A(R)$ gives \eqref{eq:extra-core-final}.
\end{proof}

\begin{remark}[Scope of the certified specialization]\label{rem:scope-certified-specialization}
Theorem~\ref{thm:general-overlap-ie} is unconditional and applies to every local repair. The compact dense-core formula in Theorem~\ref{thm:dense-core} is a specialization for instances whose certificate proves $M\le3$. This distinction prevents the multi-failure result from relying on an unproved universal pattern classification: if a future repair family produces $M>3$, the same theorem still applies with the additional $P_4,P_5,\ldots$ terms.
\end{remark}

\section{Validation and Audit}\label{sec:validation}

Table~\ref{tab:g-one-validation} verifies the exact one-failure construction and minimum-overlap values for representative radii. Table~\ref{tab:g-dense-audit} verifies the multiplicity certificate required to specialize the general inclusion--exclusion identity to the triple-core formula. Every reported quantity is recomputed from raw lattice geometry.

\begin{table}[H]
\centering
\caption{Exact one-failure validation for representative radii.}
\label{tab:g-one-validation}
\begin{tabular}{ccccc}
\toprule
$t$ & $N$ & candidates & min $K$ & best overlap\\
\midrule
1 & 5 & 12 & 3 & 0\\
2 & 13 & 40 & 2 & 3\\
5 & 61 & 220 & 2 & 6\\
10 & 221 & 840 & 2 & 11\\
15 & 481 & 1860 & 2 & 16\\
20 & 841 & 3280 & 2 & 21\\
\bottomrule
\end{tabular}
\end{table}

\begin{table}[H]
\centering
\caption{Dense-core audit summary. All quantities were recomputed from lattice geometry.}
\label{tab:g-dense-audit}
\begin{adjustbox}{max width=\textwidth}
\begin{tabular}{cccccccc}
\toprule
Faults $q$ & Cases & Extra cases & Max extra & Max cover & $C_3$ max & Residual max & $\ge4$-cover cases\\
\midrule
1 & 6    & 0   & 0  & 2 & 0 & 0 & 0\\
2 & 288  & 0   & 0  & 2 & 0 & 0 & 0\\
3 & 1800 & 56  & 15 & 3 & 1 & 0 & 0\\
4 & 1800 & 182 & 24 & 3 & 2 & 0 & 0\\
5 & 1800 & 382 & 58 & 3 & 8 & 0 & 0\\
6 & 1800 & 573 & 56 & 3 & 8 & 0 & 0\\
\midrule
Total & 7494 & 1193 & 58 & 3 & 8 & 0 & 0\\
\bottomrule
\end{tabular}
\end{adjustbox}
\end{table}

\section{Discussion}

The Gaussian-only formulation gives a focused post-deployment repair problem rather than another classification of perfect placements. The companion dense generator $(t+1)+ti$ is not omitted as a separate case: Proposition~\ref{prop:companion-symmetry} shows that it is the image of $t+(t+1)i$ under a Gaussian-grid isometry, so the local repair results transfer unchanged. The one-fault replacement and overlap formulas are exact, and the two-fault repair number is exact for all $t\ge2$ by the rotated-square corner-incompatibility argument.

The multi-failure result has a different role. It is not needed to prove the one- or two-fault replacement numbers; rather, it gives exact overlap accounting when several local repairs interact. The final theorem is a general inclusion--exclusion identity over replacement multiplicities, so it remains correct even if a non-canonical repair produces fourth- or higher-order overlaps. The simpler dense-core expression $P_2-A-C_3$ is used only when the accompanying certificate verifies maximum multiplicity three.

\section{Conclusion}

This paper introduced local fault repair for perfect resource placements in dense Gaussian networks. The two dense Gaussian generator families in the Mary--Bose classification are isometric for this problem, so the proofs for $t+(t+1)i$ also cover the companion generator $(t+1)+ti$. After a resource node fails, the uncovered region is exactly one Lee ball, and all useful replacements lie in a radius-$2t$ neighborhood. The exact one-fault repair number is $3$ for $t=1$ and $2$ for $t\ge2$, and the minimum overlap among minimum-size repairs is $t+1$. For two faults, four local replacements are always sufficient and, for $t\ge2$, always necessary. The lower bound follows from the rotated-square representation of Gaussian Lee balls and a finite reduction to two neighboring displacement types. For multi-failure repairs, the overlap inside the failed region is given by an exact inclusion--exclusion identity over replacement multiplicities. When the canonical repair certificate verifies maximum multiplicity three, this reduces to the dense-core correction $P_2-A-C_3$.

\section*{Acknowledgment}
The author thanks the Department of Computer Science, Faculty of Science, Kuwait University, for its support and research environment. This work did not receive a specific grant from any funding agency in the public, commercial, or not-for-profit sectors.

\end{document}